\numberwithin{equation}{section}
\theoremstyle{plain}
\newtheorem{theorem}{Theorem}
\newtheorem{lemma}{Lemma}
\theoremstyle{definition}
\newtheorem{definition}{Definition}
\newtheorem*{assi*}{(I) Short-range interaction}
\newtheorem*{assp*}{(P) Log-H\"older continuity condition}
\newtheorem*{dskn*}{$\dskn$}
\newtheorem*{dsknn*}{$\dsknn$}
\theoremstyle{remark}
\newcommand{\prob}[1]{\DP\left\{#1\right\}}
\newcommand{\esm}[1]{\mathbb{E}\left[\,#1\,\right]}
\newcommand{\Bone}{\mathbf{1}}
\newcommand{\BC}{\mathbf{C}}
\newcommand{\BG}{\mathbf{G}}
\newcommand{\BH}{\mathbf{H}}
\newcommand{\BK}{\mathbf{K}}
\newcommand{\BP}{\mathbf{P}}
\newcommand{\BU}{\mathbf{U}}
\newcommand{\BV}{\mathbf{V}}
\newcommand{\BX}{\mathbf{X}}
\newcommand{\CJ}{\mathcal{J}}
\newcommand{\CR}{\mathcal{R}}
\newcommand{\CB}{\mathcal{B}}
\newcommand{\DN}{\mathbb{N}}
\newcommand{\DP}{\mathbb{P}}
\newcommand{\DR}{\mathbb{R}}
\newcommand{\DZ}{\mathbb{Z}}
\newcommand{\BDelta}{\mathbf{\Delta}}
\newcommand{\BPsi}{\mathbf{\Psi}}
\newcommand{\Bx}{\mathbf{x}}
\newcommand{\By}{\mathbf{y}}
\newcommand{\Bu}{\mathbf{u}}
\newcommand{\Bv}{\mathbf{v}}
\newcommand{\FB}{\mathfrak{B}}
\newcommand{\rA}{\mathrm{A}}
\newcommand{\rB}{\mathrm{B}}
\newcommand{\rR}{\mathrm{R}}
\newcommand{\rS}{\mathrm{S}}
\DeclareMathOperator{\card}{card}
\DeclareMathOperator{\diam}{diam}
\DeclareMathOperator{\dist}{dist}
\DeclareMathOperator{\supp}{supp}
\newcommand{\ee}{\mathrm{e}}
\newcommand{\comp}{\mathrm{c}}
\newcommand{\fui}{\mathrm{FI}}
\newcommand{\pai}{\mathrm{PI}}
\newcommand{\sep}{\mathrm{sep}}
\newcommand{\condI}{\mathbf{(I)}}
\newcommand{\condP}{\mathbf{(P)}}
\newcommand{\dsk}[1]{\mathbf{(DS.}k,#1,N\mathbf{)}}
\newcommand{\dsn}[2]{\mathbf{(DS.}#1#2,$\,N\mathbf{)}$}
\newcommand{\dskn}{\mathbf{(DS.}k,N\mathbf{)}}
\newcommand{\dsknn}{\mathbf{(DS.}k,n,N\mathbf{)}}
\newcommand{\dskonn}{\mathbf{(DS.}k+1,n,N\mathbf{)}}
\begin{document}
\title[localization for weakly interacting Anderson models]{Multi-particle localization for weakly interacting Anderson tight-binding models}

\author[T.~Ekanga]{Tr\'esor EKANGA$^{\ast}$}

\address{$^{\ast}$%
Institut de Math\'ematiques de Jussieu,
Universit\'e Paris Diderot,
Batiment Sophie Germain,
13 rue Albert Einstein,
75013 Paris,
France}
\email{tresor.ekanga@imj-prg.fr}
\subjclass[2010]{Primary 47B80, 47A75. Secondary 35P10}
\keywords{multi-particle, weakly interacting systems, random operators, Anderson localization}
\date{\today}
\begin{abstract}
We establish the complete spectral exponential,  and the strong Hilbert-Schmidt dynamical localization for the one-dimensional multi-particle Anderson tight-binding model and for weakly interacting particles system. In other words, we  show stability of the one-dimensional localization from the single-particle to multi-particle systems with an arbitrary large  but finite number of particles and for sufficient weakly interacting models. The proof uses the multi-scale analysis estimates for multi-particle systems. The common probability distribution function of the random external potential in the Anderson model is assumed to be log-H\"older continuous, so the results apply to a large class of  Anderson models.
\end{abstract}
\maketitle
\section{Introduction}\label{sec:intro}
Localization for discrete multi-particle random Schr\"odinger operators was initially proved by Aizenmann and Warzel \cites{AW09,AW10} using the fractional moment method, by Chulaevsky and Suhov \cites{CS09a,CS09b} and by Klein and Nguyen \cite{KN13} using the multiscale analysis in the strong disorder regime. More recently, in \cite{FW15} the authors extended the fractional moment method on the continuous space. Some other strategies using different forms of the multi-scale analysis were recently developed by Chulaevsky \cites{C11,C12}. Similar results were obtained by Anne Boutet de Monvel et al. \cites{BCSS10,BCS11} for the multi-particle model with alloy-type external random potential in the continuum.

In \cite{AW09}, the authors assumed that the distribution function of the i.i.d. random variables is absolutely continuous with a bounded density satisfying another technical condition and  proved stability of localization from single-particle to weakly interacting multi-particle systems. More important, in \cite{AW09}, the authors obtained localization in the so-called Hausdorff-distance and the dynamical localization in the operator-norm. The present work based on the multi-scale analysis proves the spectral exponential localization in the max-norm and the dynamical localization in the Hilbert-Schmidt norm. In addition our method covers a large class of random Hamiltonians with log-H\"older continuous distributions.

We decided to address here only the lattice case. The continuum version of the work is the object of a forthcoming paper. We use an adaptation of the multi-particle multi-scale analysis of \cites{CS09b,C12} in the high disorder case and in \cites{E11,E12} in the low energy case and show the complete spectral exponential and the strong Hilbert-Schmidt dynamical localization for discrete models with log-H\"older continuous distributions in one dimension. The general strategy which uses a perturbation argument based on the resolvent identities for operators in Hilbert spaces is valid in an arbitrary dimension. We restricted ourselves in one dimension in order to use the well-known one dimensional complete localization for the single-particle system and then extend it to weakly interacting multi-particle systems.

Note that, there is no proof of the complete localization in higher dimension instead of the strong disorder regime. In fact, localization is only proved  at extreme energies.  In one dimension Carmona et al. \cites{CL90,CKM87} in the discrete case and Damanik et al. \cite{DSS02} in the continuum  proved  strong forms of the  Wegner and the  multi-scale analysis estimates for the single-particle Anderson-Bernoulli models. 

We substantially modified in this text, the multi-particle multi-scale analysis of \cite{CS09b} on the one hand and \cite{E12} on the other hand. Indeed, in the first paper, i.e., \cite{CS09b}, the authors used the fact that the disorder amplitude goes to infinity in the case of strong disorder and in the second paper, i.e., \cite{E12}, the multi-particle multi-scale analysis is done for energies in an unbounded interval of the form $(-\infty;E^*]$ for some constant $E^*>0$. In the present work, only compact intervals are of interest because we use the complete localization results from single-particle models.

The derivation of the spectral localization from the multi-scale analysis results is then obtained using the scheme proposed in \cites{E11,E12} which idea goes back to Fr\"ohlich et al. \cite{FMSS85} and von Dreifus and Klein \cite{DK89}. While for the proof of the strong dynamical localization we refer again to the paper \cite{E12}.  We recall that dynamical localization with methods relying on the multi-scale analysis was initially obtained by Germinet and De Bi\`evre \cite{GB98}, Damanik and Stollmann \cites{DS01,St01} and Germinet and Klein \cite{GK01}.

In one dimension, our main results for an arbitrary amplitude of the disorder and for sufficient weakly interacting systems are  Theorem  \ref{thm:exp.loc} (Anderson localization) and  Theorem \ref{thm:dynamical.loc} (strong dynamical localization).

In Section~\ref{sec:model.assumptions} we describe our multi-particle model and the main assumptions.  Section~\ref{sec:main.results} is devoted to the statement of the results. We establish in Section \ref{sec:initial.MSA} the initial scale length estimate for the one-dimensional multi-particle system with weak interaction. In Section \ref{sec:MSA.induction}, we develop the one-dimensional variable energy multi-particle multi-scale induction step on the lattice. The proofs of  the main results based on the multi-particle multi-scale analysis bounds can be found in the paper \cite{E12} Section 5.

\section{The model, assumptions and the results}\label{sec:model.assumptions}

\subsection{The n-particle Hamiltonian on the lattice}
Define the two following norms on $\ell^2(\DZ^{D})$ for arbitrary $D\geq 1$: $|\Bx|=\max_{i=1,\ldots,D}|x_i|$ and $|\Bx|_1=|x_1|+\cdots+|x_D|$.  We consider a system of $N$-particles where $N\geq 2$ is finite and fixed. Let $d\geq 1$ and  $1\leq n\leq N$. We analyze random Hamiltonian $\BH^{(n)}_{h}(\omega)$  of the form
\begin{equation}\label{eq:hamiltonian}
\BH^{(n)}_{h}(\omega)=-\BDelta+\sum_{j=1}^nV(x_j,\omega)+h\BU=-\BDelta+\BV(\Bx,\omega)+h\BU,
\end{equation}
acting on $\ell^2((\DZ^{d})^n)\cong \ell^2(\DZ^{nd})$ with $h\in\DR$ and $\Bx\in(\DZ^d)^n$.  Above, $\BDelta$ is the $nd$-dimensional lattice nearest-neighbor Laplacian:
\begin{equation}           \label{eq:def.Delta}
(\BDelta\BPsi)(\Bx)
=\sum_{\substack{\By\in\DZ^{nd}\\|\By-\Bx|_1=1}}\left(\BPsi(\By)-\BPsi(\Bx)\right)
=\sum_{\substack{\By\in\DZ^{nd}\\|\By-\Bx|_1=1}}\BPsi(\By)-2dn\BPsi(\Bx),
\end{equation}
for $\BPsi\in\ell^2(\DZ^{nd})$ and $\Bx\in\DZ^{nd}$. $V\colon\DZ^d\times\Omega\to\DR$ is a random field relative to a probability space $(\Omega,\FB,\DP)$ and $\BU\colon(\DZ^d)^n\to\DR$ is the potential of inter-particle interaction. $\BV$ and $\BU$ act on $\ell^2(\DZ^{nd})$ as multiplication operators by functions $\BV(\Bx,\omega)$ and $\BU(\Bx)$ respectively.

Technically, we will prove in this paper stability of initial MSA bounds  from the one-dimensional single-particle estimates to the multi-particle estimates under sufficiently weak interaction and then  perform the  multi-particle multi-scale analysis for the weakly interacting multi-particle Anderson model and obtain the complete Anderson localization.

\subsection{Assumptions}

\begin{assi*}

Fix any $n=1,\ldots,N$. The potential of inter-particle interaction $\mathbf{U}$ is bounded and of the form
\[
\BU(\Bx)=\sum_{1\leq i<j\leq n}\Phi(|x_i-x_j|),\quad \Bx=(x_1,\ldots,x_n),
\]
where  $\Phi:\DN:\rightarrow\DR$ is a compactly supported function such that

\begin{equation}\label{eq:finite.range.k}
\exists r_0\in\DN: \supp \Phi\subset[0,r_0].
\end{equation}

\end{assi*}

Set $\Omega=\DR^{\DZ^d}$ and $\FB=\bigotimes_{\DZ^d}\mathcal{B}(\DR)$ where $\mathcal{B}(\DR)$ is the Borel sigma-algebra on $\DR$. Let $\mu$ be a probability measure on $\DR$ and define $\DP=\bigotimes_{\DZ^d}\mu$ on $\Omega$.

The external random potential $V\colon\DZ^d\times\Omega\to\DR$ is an i.i.d. random field
relative to  $(\Omega,\FB,\DP)$ and is defined by $V(x,\omega)=\omega_x$ for $\omega=(\omega_i)_{i\in\DZ^d}$.
The common probability distribution function, $F_V$,  of the i.i.d. random variables $V(x,\cdot)$, $x\in\DZ^d$ associated to the measure $\mu$ is defined by
\[
F_V: t \mapsto \prob{V(0,\omega)\leq t }.
\]
\begin{assp*}
The random potential $V:\DZ^d\times\Omega\rightarrow \DR$ is i.i.d. and almost surely bounded, i.e., there exists $M\in(0,\infty)$ such that $\supp \mu \subset[-M,M]$ where $\mu$ is the common probability distribution measure of the random field $\{V(x,\omega)\}_{x\in\DZ^d}$. Further, the corresponding probability distribution function $F_V$ is log-H\"older continuous:  More precisely,
\begin{align}\label{eq:assumption.3prime}
&s(F_V,\varepsilon) := \sup_{a\in\DR}(F_V(a+\varepsilon)-F_V(a))
\leq\frac{C}{|\ln\epsilon|^{2A}}\\
&\text{for some }C\in(0,\infty)\text{ and }A>\frac{3}{2}\times4^Np+9Nd.\notag
\end{align}
Note that this last condition depends on the parameter $p$ which will be introduced  in Section \ref{sec:Nparticle.scheme}.
\end{assp*}

Remark that, under the assumptions $\condI$ and $\condP$, we have that the spectrum of the one-dimensional multi-particle  Hamiltonian $\BH^{(N)}_{h}(\omega)$ satisfies:
\[
\sigma(\BH^{(N)}(\omega){h})\subset [-N(4d+M)-|h|\|\BU\|,N(4d+M)+|h|\|\BU\|] \quad \emph{a.s.}
\]
Therefore it suffices for our purposes to show the Anderson localization on the interval 
\[
 I=[-N(4d+M)-|h|\|\BU\|,N(4d+M)+|h|\|\BU\|].
\]

\subsection{The results}\label{sec:main.results}

\begin{theorem}\label{thm:exp.loc}
Let $d=1$. Under assumptions $\condI$ and $\condP$,
 there exists $h^*>0$ such that for any $h\in(-h^*, h^*)$ the Hamiltonian $\BH^{(N)}_h$, with interaction of amplitude $|h|$, exhibits complete Anderson localization, \emph{i.e.}, with $\DP$-probability one,
the spectrum of $\BH^{(N)}_h$  is pure point, and the eigenfunctions $\BPsi_i(\Bx,\omega)$ relative to eigenvalues $E_i(\omega)\in I$ are exponentially decaying at infinity:
\[
|\BPsi_i(\Bx,\omega)|\leq C_i\ee^{-c|\Bx|},
\]
for some positive constants $c$ and $C_i$.
\end{theorem}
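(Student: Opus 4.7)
The plan is to derive Theorem \ref{thm:exp.loc} from the variable-energy multi-particle multiscale analysis (MSA) bounds that will be established in Section \ref{sec:VE.MSA.weak.interaction}, following the scheme of \cite{FMSS85,DK89} as adapted in \cite{E11,E12}. Concretely, Sections \ref{sec:Weg.CT}--\ref{sec:VE.MSA.weak.interaction} will supply, for some $h^{*}>0$, a sequence of scales $L_k$ and exponents $\gamma,p>0$, a polynomial decay bound on the probability that a pair of sufficiently separated $Nd$-dimensional boxes $\Lambda_{L_k}(\Bx),\Lambda_{L_k}(\By)$ in configuration space are both singular for $\BH^{(N)}_h-E$ for some $E\in I$. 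That probabilistic estimate is the sole input to what follows.

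The main obstacle, and the reason the perturbation parameter $h$ must be small, is the \emph{initial scale length estimate} of Section \ref{sec:stab.MSA.weak}. At $h=0$ the Hamiltonian $\BH^{(N)}_{0}$ decouples into a sum of $N$ commuting single-particle operators, so the one-dimensional single-particle localization results of \cite{CKM87,CL90} yield exponential decay of the Green's function on any preassigned initial scale $L_0$ with probability as close to one as desired. For $|h|<h^{*}$ this decay is transferred to the interacting operator by means of the resolvent identity
\[
(\BH^{(N)}_{h}-E)^{-1}=(\BH^{(N)}_{0}-E)^{-1}-h\,(\BH^{(N)}_{h}-E)^{-1}\,\BU\,(\BH^{(N)}_{0}-E)^{-1},
\]
restricted to a finite box, in combination with the Combes--Thomas estimate of Section \ref{sec:Weg.CT}. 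Choosing $h^{*}$ so small that $|h|\,\|\BU\|$ is dominated by the unperturbed decay on scale $L_0$ yields the initial MSA hypothesis for $\BH^{(N)}_{h}$; the inductive step of Section \ref{sec:MSA.induction} and the Wegner estimate of Section \ref{sec:Weg.CT} then propagate it to every scale, with the usual separate treatment of partially interactive box pairs in the spirit of \cite{C12}.

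Granted the variable-energy MSA bounds, the proof of (i) and (ii) proceeds as in \cite{E11,E12}. By Schnol's theorem it suffices to show that, with probability one, every polynomially bounded generalized eigenfunction $\BPsi$ of $\BH^{(N)}_{h}$ with eigenvalue $E\in I$ decays as in part~(ii). For such $\BPsi$ and for $\Bx\in\DZ^{Nd}$ with $|\Bx|$ large, I would match the MSA scale $L_k$ to $|\Bx|$ via a rule of the form $L_k\sim(\ln|\Bx|)^{1+c}$, and apply the geometric resolvent inequality inside a box centered at $\Bx$ to express $\BPsi(\Bx)$ in terms of its boundary values times the finite-volume resolvent. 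A Borel--Cantelli argument against the MSA probability bound shows that, outside a $\DP$-null event, only finitely many boxes of that scale centered along a path of the required annular cover are singular. On the complementary full-measure event, iterating the geometric resolvent inequality together with the polynomial growth of $\BPsi$ yields $|\BPsi(\Bx,\omega)|\le C(\omega)\,\ee^{-a(\ln|\Bx|)^{1+c}}$, which is (ii). The ensuing square-summability of every generalized eigenfunction with eigenvalue in $I$ then implies pure point spectrum in $I$, proving (i); outside of $I$ the spectrum is empty almost surely by the deterministic inclusion stated after $\condP$.
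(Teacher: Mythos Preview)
Your overall architecture matches the paper: feed the variable-energy MSA output of Section~\ref{sec:VE.MSA.weak.interaction} into a Borel--Cantelli argument over pairs of separable cubes, then bound a polynomially bounded generalized eigenfunction via the geometric resolvent inequality. The perturbative treatment of the initial scale via the second resolvent identity is also essentially what Section~\ref{sec:stab.MSA.weak} does.

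The genuine gap is your scale-matching rule $L_k\sim(\ln|\Bx|)^{1+c}$ and the accompanying picture of exponential resolvent decay plus iteration. The variable-energy estimate actually produced here (Theorem~\ref{thm:VE.bound}) does \emph{not} deliver exponential decay of the finite-volume Green function; it only guarantees that, off a bad event of probability $\lesssim L_k^{-p_k/5+(2N+1)d}$, one has $\rF_{\Bx}(E)<2L_k^{-p_k/5}$ with $p_k=p(1+\theta)^k$. So the resolvent control on a cube of side $L_k$ is merely polynomial in $L_k$, but with an exponent that grows geometrically in $k$. Accordingly the paper uses the natural matching $|\Bx-\Bu|\asymp L_k$, placing $\Bx$ in an annulus $\widetilde A_{k+1}(\Bu)$ of radii comparable to $L_k$ and $L_{k+1}$; the Borel--Cantelli sum $\sum_k L_{k+1}^{Nd}\,L_k^{-p_k/5+(2N+1)d}$ then converges, and a \emph{single} application of the eigenfunction GRI yields $|\BPsi(\Bx)|\lesssim L_k^{-p_k/5+\mathrm{const}}$. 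The sub-exponential rate emerges from the arithmetic $p_k\ln L_k\asymp(\ln L_k)^{1+c}$ with $c=\ln(1+\theta)/\ln\alpha$, combined with $\ln|\Bx-\Bu|\asymp\ln L_k$. By contrast, your rule $L_k\sim(\ln|\Bx|)^{1+c}$ would place roughly $\exp\bigl(Nd\,L_k^{1/(1+c)}\bigr)$ lattice points in each shell, and neither a fixed power $L_k^{-2p}$ nor the bound $L_k^{-p_k/5}$ can beat that in the Borel--Cantelli sum. In short: keep $|\Bx|\sim L_k$; the factor $(\ln|\Bx|)^{1+c}$ comes from the growing exponent $p_k$ in the MSA output, not from the choice of scale.
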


Denote by  $\CB_1$ the set  of bounded measurable functions $f:\DR\rightarrow\DR$ such that $\|f\|_{\infty}\leq 1$. We now give our result on strong Hilbert-Schmidt dynamical localization.

\begin{theorem}\label{thm:dynamical.loc}
Let $d=1$. Under assumptions $\condI$ and $\condP$, there exists $h^*>0$ and $s^*>0$ such that for any $h\in(-h^*, h^*)$  any bounded Borel function $f:\DR\rightarrow\DR$, any bounded region $\BK\subset\DZ^{Nd}$ and any $s\in(-s^*,s^*)$ we have:
\begin{equation}\label{eq:weak.interaction.dynamical.loc}
\esm{\sup_{f\in\CB_1}\Bigl\| |\BX|^{\frac{s}{2}}f(\BH^{(N)}(\omega))\BP_{I}(\BH^{(N)}_h(\omega))\Bone_{\BK}\Bigr\|_{HS}^2}<\infty.
\end{equation}
\end{theorem}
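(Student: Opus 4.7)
The plan is to derive the Hilbert--Schmidt dynamical bound \eqref{eq:weak.interaction.dynamical.loc} from the variable-energy multi-particle MSA estimates that will be established in Section~\ref{sec:VE.MSA.weak.interaction}, following the scheme of \cite{E12}, which itself adapts Germinet--Klein \cite{GK01} and Damanik--Stollmann \cite{DS01} to the multi-particle setting. The first step is to expand the Hilbert--Schmidt norm as a double sum of matrix elements and to pull the supremum over $f\in\CB_1$ inside the expectation:
\[
\esm{\sup_{f\in\CB_1}\Bigl\| |\BX|^{s/2}f(\BH^{(N)}_h)\BP_I(\BH^{(N)}_h)\Bone_{\BK}\Bigr\|_{HS}^2}
\leq\sum_{\By\in\BK}\sum_{\Bx\in\DZ^{Nd}}|\Bx|^s\,\esm{Q(\Bx,\By;I)^2},
\]
where $Q(\Bx,\By;I):=\sup_{f\in\CB_1}|\langle\delta_\Bx, f(\BH^{(N)}_h)\BP_I(\BH^{(N)}_h)\delta_\By\rangle|$ is the eigenfunction correlator. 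Since $\BK$ is finite, the problem reduces to a uniform estimate on $\esm{Q(\Bx,\By;I)^2}$ with enough decay in $|\Bx|$ to absorb the polynomial weight.

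The second step is to convert the variable-energy MSA Green's function decay into EF correlator decay. The standard route, already used at fixed energy in \cite{E12}, combines the Wegner estimates from Section~\ref{sec:Weg.CT} with geometric resolvent identities on disjoint boxes around $\Bx$ and $\By$, yielding
\[
\esm{Q(\Bx,\By;I)^2}\leq C\,\ee^{-a(\ln|\Bx-\By|)^{1+c}}
\]
with the same $a,c>0$ as in Theorem~\ref{thm:exp.loc}. This sub-exponential weight decays faster than any polynomial, so $\sum_{\Bx}|\Bx|^s\,\esm{Q(\Bx,\By;I)^2}<\infty$ and \eqref{eq:weak.interaction.dynamical.loc} follows.

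The hard part will be handling the multi-particle configuration geometry. Pairs $(\Bx,\By)$ split into several categories -- fully interactive, partially interactive, and mixed -- depending on whether the projections of the particle clusters are separated, and each category requires its own treatment. Fully interactive pairs are controlled directly by the variable-energy MSA bounds of Section~\ref{sec:VE.MSA.weak.interaction}; partially interactive pairs must be reduced to $n$-particle subsystems with $n<N$ via a partial tensor decomposition of $\BH^{(N)}_h$, which forces an induction on the number of particles and a careful use of the finite range condition \eqref{eq:finite.range.k}. A further subtle point is that pulling $\sup_{f\in\CB_1}$ inside the expectation must be done via the duality between bounded Borel functions and total variation of the spectral measure, and this must be compatible with the variable-energy (rather than fixed-energy) MSA output, i.e., uniform control over the whole interval $I$ rather than at a single energy. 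Once the induction on $n$ and the geometric case analysis are in place, the assembly of the bound is essentially bookkeeping.
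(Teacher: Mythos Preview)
Your overall strategy---expand the Hilbert--Schmidt norm, reduce to eigenfunction correlators $Q(\Bx,\By;I)$, and feed in the variable-energy MSA bound of Theorem~\ref{thm:VE.bound}---is exactly what the paper does, via Lemma~\ref{lem:decay,kernels,WI} and the annulus decomposition $\BM_j(\Bzero)$ in Section~\ref{sec:dynamical.loc}. The use of the generalised eigenfunction expansion (Theorem~\ref{thm:spectral,densities}) plays the role of your ``duality with total variation'' step.

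Where you diverge is in what you call ``the hard part''. You propose a fresh FI/PI/mixed case analysis on pairs $(\Bx,\By)$ together with an induction on the particle number $n$ and a tensor decomposition for PI pairs. The paper does none of this here, and it is not needed: all of that structure was already consumed in Sections~\ref{sec:stab.MSA.weak}--\ref{sec:VE.MSA.weak.interaction} to produce Theorem~\ref{thm:VE.bound}, which holds for \emph{any} pair of separable $N$-particle cubes. The key observation you are missing is that boundedness of $\BK$ makes separability automatic: for $\By\in\BK\subset\BC^{(N)}_{L_{k_0}}(\Bzero)$ one has $\diam(\varPi\By)\leq 2L_{k_0}$, so Lemma~\ref{lem:separable.distant}(B) guarantees that $\BC^{(N)}_{L_j}(\Bx)$ and $\BC^{(N)}_{L_j}(\By)$ are separable whenever $\Bx\in\BM_j(\Bzero)$ and $j\geq k_0$. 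Thus Theorem~\ref{thm:VE.bound} applies directly to every such pair, with no further geometric case splitting. Your route would still work, but it re-derives machinery that is already packaged in the VE bound; the paper's argument is shorter precisely because it exploits the fixed, bounded anchor $\BK$.
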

where $(|\BX|\BPsi)(\Bx):=|\Bx|\BPsi(\Bx)$, $\BP_{I}(\BH^{(N)}(\omega))$ is the spectral projection of $\BH^{(N)}(\omega)$ onto the interval $I$. The rest of the paper has a significant overlap with the text of \cite{E12}. Hence, we will refer to \cite{E12} for  some details and proofs. 
 
\section{The multi-particle multi-scale analysis scheme and the Wegner bounds}\label{sec:Nparticle.scheme}
According to the general structure of the MSA, we work with lattice \emph{rectangles}. For $\Bu=(u_1,\ldots,u_n)\in\DZ^{nd}$, we denote by $\BC^{(n)}_L(\Bu)$ the $n$-particle cube, i.e,
\[
\BC^{(n)}_L(\Bu)=\left\{\Bx\in\DZ^{nd}:|\Bx-\Bu|\leq L\right\},
\]
and given $\{L_i: i=1,\ldots,n\}$, we define the rectangle
\begin{equation}          \label{eq:cube}
\BC^{(n)}(\Bu)=\prod_{i=1}^n C^{(1)}_{L_i}(u_i),
\end{equation}
where $C^{(1)}_{L_i}(u_i)$ are cubes of side length $L_i$ center at points $u_i$.
We  define the \emph{internal boundary} of the domain $\BC^{(n)}(\Bu)$ by
\begin{equation}\label{eq:int.boundary}
\partial^-\BC^{(n)}(\Bu)=\left\{\Bv\in\DZ^{nd}:\dist\left(\Bv,\DZ^{nd}\setminus \BC^{(n)}(\Bu)\right)=1\right\},       
\end{equation}
and its \emph{external boundary} by
\begin{equation}\label{eq:ext.boundary}
\partial^+\BC^{(n)}(\Bu)=\left\{\Bv\in\DZ^{nd}\setminus\BC^{(n)}(\Bu):\dist\left(\Bv, \BC^{(n)}(\Bu)\right)=1\right\}.
\end{equation}
The cardinality of the cube $\BC^{(n)}_L(\Bu)$ is $|\BC_L^{(n)}(\Bu)| := \card\BC_L^{(n)}(\Bu)=(2L+1)^{nd}$.
We define the restriction of the Hamiltonian $\BH_h^{(n)}$ to  $\BC^{(n)}(\Bu)$ by
\begin{align*}
&\BH_{\BC^{(n)}(\Bu),h}^{(n)}=\BH^{(n)}_h\big\vert_{\BC^{(n)}(\Bu)}\\
&\text{with simple boundary conditions on }\partial^+\BC^{(n)}(\Bu),
\end{align*}
i.e., $\BH^{(n)}_{\BC^{(n)}(\Bu),h}(\Bx,\By)=\BH^{(n)}_h(\Bx,\By)$ whenever $\Bx,\By\in\BC^{(n)}(\Bu)$ and $\BH^{(n)}_{\BC^{(n)}(\Bu),h}(\Bx,\By)=0$ otherwise.
We denote the spectrum of $\BH_{\BC^{(n)}(\Bu),h}^{(n)}$  by
$\sigma\bigl(\BH_{\BC^{(n)}(\Bu)}^{(n),h}\bigr)$ and its resolvent by
\begin{equation}\label{eq:def.resolvent}
\BG_{\BC^{(n)}(\Bu),h}(E):=\Bigl(\BH_{\BC^{(n)}(\Bu),h}^{(n)}-E\Bigr)^{-1},\quad E\in\DR\setminus\sigma\Bigl(\BH_{\BC^{(n)}(\Bu),h}^{(n)}\Bigr).
\end{equation}
The matrix elements $\BG_{\BC^{(n)}(\Bu),h}(\Bx,\By;E)$ are usually called the
\emph{Green functions} of the operator $\BH_{\BC^{(n)}(\Bu),h}^{(n)}$.

Let $m>0$ and $E\in\DR$ be given.  A cube $\BC_L^{(n)}(\Bu)\subset\DZ^{nd}$, $1\leq n\leq N$ will be  called $(E,m,h)$-\emph{nonsingular} ($(E,m,h)$-NS) if $E\notin\sigma(\BH^{(n)}_{\BC^{(n)}_{L}(\Bu),h})$ and
\begin{equation}\label{eq:singular}
\max_{\Bv\in\partial^-\BC_L^{(n)}(\Bu)}\left|\BG_{\BC_L^{(n)}(\Bu),h}(\Bu,\Bv;E)\right|\leq\ee^{-\gamma(m,L,n)L},
\end{equation}
where 
\begin{equation}\label{eq:gamma}
\gamma(m,L,n)=m(1+L^{-1/8})^{N-n+1}.
\end{equation}
Otherwise it will be called $(E,m,h)$-\emph{singular} ($(E,m,h)$-S).

Let us introduce the following.
\begin{definition}
Let $n\geq 1$, $E\in\DR$ and $\alpha=3/2$. 
\begin{enumerate}[\rm(A)]
\item
A cube $\BC_L^{(n)}(\Bv)\subset\DZ^{nd}$  is called $(E,h)$-resonant ($(E,h)$-R) if
\begin{equation} \label{eq:E-resonant}
\dist\Bigl[E,\sigma\bigl(\BH_{\BC_L^{(n)}(\Bv),h}^{(n)}\bigr)\Bigr]\leq\ee^{-L^{1/2}}.
\end{equation}
Otherwise it is called $(E,h)$-non-resonant ($(E,h)$-NR).
\item
A cube $\BC_L^{(n)}(\Bv)\subset\DZ^{nd}$ is called $(E,h)$-completely nonresonant ($(E,h)$-CNR), if it does not contain any $(E,h)$-R cube of size $\geq L^{1/\alpha}$. In particular $\BC^{(n)}_L(\Bv)$ is itself $(E,h)$-NR.
\end{enumerate}
\end{definition}

We will also make use of the following notion.

\begin{definition}\label{def:separability}
A cube $\BC^{(n)}_L(\Bx)$ is  $\CJ$-separable from $\BC^{(n)}_L(\By)$ if there exists a nonempty subset $\CJ\subset\{1,\cdots,n\}$ such that
\[
\left(\bigcup_{j\in \CJ}C^{(1)}_{L}(x_j)\right)\cap
\left(\bigcup_{j\notin \CJ}C_L^{(1)}(x_j)\cup \bigcup_{j=1}^n C_L^{(1)}(y_j)\right)=\emptyset.
\]
A pair $(\BC^{(n)}_L(\Bx),\BC^{(n)}_L(\By))$ is separable if $|\Bx-\By|>7NL$ and if one of the cube is $\CJ$-separable from the other.
\end{definition}
\begin{lemma}[\cite{E12}]\label{lem:separable.distant}
Let $L>1$.
\begin{enumerate}[\rm(A)]
\item
For any $\Bx\in\DZ^{nd}$, there exists a collection of $n$-particle cubes
$\BC^{(n)}_{2nL}(\Bx^{(\ell)})$ with $\ell=1,\ldots,\kappa(n)$, $\kappa(n)= n^n$ such that if
 $\By$ satisfies $|\By-\Bx|>7NL$ and
\[
\By \notin \bigcup_{\ell=1}^{\kappa(n)} \BC^{(n)}_{2nL}(\Bx^{(\ell)})
\]
then the cubes $\BC^{(n)}_L(\Bx)$ and $\BC^{(n)}_L(\By)$ are separable.
\item
Let $\BC^{(n)}_L(\By)\subset \DZ^{nd}$ be an $n$-particle cube. Any cube  $\BC^{(n)}_L(\Bx)$ with 
\[
|\By-\Bx|>\max_{1\leq i,j\leq n}|y_i-y_j| +5NL,
\]
 is $\CJ$-separable from
$\BC^{(n)}_L(\By)$ for some $\CJ\subset\{1,\ldots,n\}$.
\end{enumerate}
\end{lemma}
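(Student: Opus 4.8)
The plan is to reduce both parts to one piece of elementary combinatorics about the intersection graph of the $2n$ single-particle cubes that make up $\BC^{(n)}_L(\Bx)$ and $\BC^{(n)}_L(\By)$. Given $\Bx=(x_1,\dots,x_n)$ and $\By=(y_1,\dots,y_n)$ in $\DZ^{nd}$, I would form the graph $\mathcal G$ on the $2n$ vertices $x_1,\dots,x_n,y_1,\dots,y_n$, joining two of them by an edge whenever the corresponding radius-$L$ cubes in $\DZ^d$ intersect, i.e. whenever the two centres lie at sup-distance at most $2L$. The first step is the equivalence: $\BC^{(n)}_L(\Bx)$ is $\CJ$-separable from $\BC^{(n)}_L(\By)$ for some nonempty $\CJ\subset\{1,\dots,n\}$ if and only if $\mathcal G$ has a connected component containing at least one $x_i$ and no $y_j$. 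Indeed, the union of the cubes attached to one component of $\mathcal G$ is disjoint from every cube attached to a vertex outside that component (by the very definition of the edges), so a ``pure-$x$'' component yields such a $\CJ$; conversely, if $j_0\in\CJ$, then following a path of $\mathcal G$ out of $x_{j_0}$ and invoking the disjointness forces the whole component of $x_{j_0}$ to consist of cubes indexed by $\CJ$, hence to contain no $y_j$. Consequently, neither of $\BC^{(n)}_L(\Bx),\BC^{(n)}_L(\By)$ is $\CJ$-separable from the other precisely when every component of $\mathcal G$ contains both an $x$-vertex and a $y$-vertex; I will call this the entangled configuration.

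For part (B): assume $|\Bx-\By|=\max_i|x_i-y_i|>\max_{1\le i,j\le n}|y_i-y_j|+5NL$ and pick $i_0$ realising the left-hand maximum. If $\BC^{(n)}_L(\Bx)$ were not $\CJ$-separable from $\BC^{(n)}_L(\By)$, then by Step 1 the component of $x_{i_0}$ in $\mathcal G$ contains some $y_k$; since a component has at most $2n$ vertices and consecutive centres along a path are within $2L$, this gives $|x_{i_0}-y_k|\le(2n-1)\,2L<4NL$, whence $|x_{i_0}-y_{i_0}|\le|x_{i_0}-y_k|+|y_k-y_{i_0}|<\max_{i,j}|y_i-y_j|+4NL$, contradicting the hypothesis. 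So a pure-$x$ component exists and part (B) follows.

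For part (A): the point is to cover all entangled $\By$ by exactly $\kappa(n)=n^n$ cubes. In the entangled configuration, for each $k$ run a breadth-first search in $\mathcal G$ from $y_k$ inside its (mixed) component, and let $x_{\sigma(k)}$ be the first $x$-vertex met, at graph-depth $t$; then all vertices at depths $0,1,\dots,t-1$ are distinct $y$-vertices, so $t\le n$ and $|y_k-x_{\sigma(k)}|\le 2nL$. Hence every entangled $\By$ satisfies $\By\in\prod_{k=1}^n C^{(1)}_{2nL}(x_{\sigma(k)})=\BC^{(n)}_{2nL}(\Bx^{(\sigma)})$ for the resulting map $\sigma\colon\{1,\dots,n\}\to\{1,\dots,n\}$, and there are exactly $n^n$ such maps. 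Relabelling these cubes as $\BC^{(n)}_{2nL}(\Bx^{(\ell)})$, $\ell=1,\dots,\kappa(n)$, we conclude: if $|\By-\Bx|>7NL$ and $\By$ lies in none of them, the configuration is not entangled, so one of $\BC^{(n)}_L(\Bx),\BC^{(n)}_L(\By)$ is $\CJ$-separable from the other, and together with $|\Bx-\By|>7NL$ this is exactly the definition of a separable pair.

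The only delicate point I anticipate is the depth bound $t\le n$ in the last step: it is what keeps the number of auxiliary cubes down to $n^n$, whereas a cruder estimate using the full component diameter $(2n-1)\,2L$ would inflate the radius to $(4n-2)L$ or enlarge $\kappa(n)$. Everything else is bookkeeping — one just has to verify carefully the ``distinct components $\Rightarrow$ disjoint cube unions'' property underlying Step 1, and the trivial fact that two radius-$L$ single-particle cubes meet iff their centres are within sup-distance $2L$.
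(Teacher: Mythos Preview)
The paper does not actually prove this lemma; it merely quotes it from \cite{E12} and gives no argument here. So there is nothing in the present paper to compare your proposal against.

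That said, your argument is correct and self-contained. The reduction to the intersection graph $\mathcal G$ on the $2n$ single-particle centres, together with the equivalence ``$\CJ$-separable $\Leftrightarrow$ some component of $\mathcal G$ is pure-$x$'', is exactly the right combinatorial framework and is the standard way these separability lemmas are proved in the multi-particle MSA literature (Chulaevsky--Suhov, etc.). Your proof of (B) via the path-length bound $(2n-1)\cdot 2L<4NL$ is clean, and for (A) the BFS depth bound $t\le n$ is the key observation that pins down both the radius $2nL$ and the count $\kappa(n)=n^n$ of auxiliary cubes, as you correctly flag. One minor remark: in (A) the conclusion you need is that \emph{one of} the two cubes is $\CJ$-separable from the other (that is what Definition~\ref{def:separability} of a separable pair requires), and ``not entangled'' delivers precisely this, since a pure-$x$ component gives $\CJ$-separability of $\BC^{(n)}_L(\Bx)$ from $\BC^{(n)}_L(\By)$ while a pure-$y$ component gives the reverse. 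You state this, but it is worth making explicit that either direction suffices for the definition.
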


In our earlier work \cite{E12}  as well as in other previous papers in the multi-particle localization theory \cites{CS09b,BCSS10} the above notion was crucial in order to prove the Wegner estimates for pairs of multi-particle cubes via the Stollmann's Lemma. It is plain (cf. \cite{E12}, Section 4.1), that  sufficiently distant pairs of fully interactive cubes have disjoint projections and this fact combined with independence is used in that case to bound the probability of an intersection of events relative to those projections. There is a significant overlap of the induction step of the multi-scale analysis of the present text with its counterpart in our earlier paper  \cite{E12}.

\begin{theorem}[Wegner estimates \cite{E12}]\label{thm:Wegner}
 Assume that the random potential satisfies assumption $\condP$, then 
\begin{enumerate}

\item[\rm(A)]
for any $E\in\DR$
\begin{equation}\label{eq:cor.Wegner.2A}
\prob{\text{$ \BC^{(n)}_L(\Bx)$ is not $E$-CNR }}\leq L^{-p\,4^{N-n}},
\end{equation}

\item[\rm(B)]
\begin{equation}\label{eq:cor.Wegner.2B}
\prob{\text{$\exists E\in \DR:$ neither $\BC^{(n)}_{L}(\Bx)$ nor $\BC^{(n)}_{L}(\By)$ is $E$-CNR}} \leq L^{-p\,4^{N-n}},
\end{equation}
\end{enumerate}
where $p>6Nd$, depends only on the fixed number of particles $N$ and the configuration dimension $d$.
\end{theorem}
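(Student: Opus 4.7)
The plan is to reduce both parts to a single-cube Wegner estimate coming from Stollmann's spectral averaging lemma, then handle the additional quantifiers by a union bound over sub-cubes for (A) and by a discretization of the energy combined with the separability structure for (B).

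First I would record the fundamental single-cube bound: for any cube $\BC^{(n)}_\ell(\Bu)$, any $E\in\DR$ and any $\epsilon>0$,
\[
\prob{\dist\bigl(E,\sigma(\BH^{(n)}_{\BC^{(n)}_\ell(\Bu),h})\bigr)\leq \epsilon} \leq C_{n,d}\,\ell^{2nd}\,\|\rho\|_{\infty}\,\epsilon.
\]
This is a direct consequence of Stollmann's lemma applied to the diagonal random operator $\BV$, since the random potential enters $\BH^{(n)}_h$ monotonically in each $V(z,\omega)$ and the deterministic interaction $h\BU$ plays no role. Under assumption $\condP$ one has $s(\mu,\epsilon)\leq\|\rho\|_\infty\epsilon$, yielding the linear-in-$\epsilon$ dependence.

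For part (A), I would write the complementary event as a union: $\BC^{(n)}_L(\Bx)$ fails to be $(E,h)$-CNR precisely when either $\BC^{(n)}_L(\Bx)$ itself is $(E,h)$-R or some sub-cube $\BC^{(n)}_\ell(\Bv)\subset \BC^{(n)}_L(\Bx)$ with $\ell\geq L^{1/\alpha}$ satisfies the resonance bound \eqref{eq:E-resonant}. With at most $(2L+1)^{nd}$ centers and $O(L)$ admissible sizes, a union bound with $\epsilon=\ee^{-\ell^{1/2}}$ inserted in the single-cube estimate gives
\[
\prob{\BC^{(n)}_L(\Bx) \text{ is not } (E,h)\text{-CNR}} \leq \sum_{\ell\geq L^{1/\alpha}} C\,L^{nd+1}\,\ell^{2nd}\,\ee^{-\ell^{1/2}},
\]
which for large $L$ is dominated by the smallest-$\ell$ contribution and is bounded by $\ee^{-\tau_1 L^{1/2}}$ for the claimed range of $\tau_1$, once $\alpha$ is chosen appropriately.

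For part (B), the new difficulty is the universal quantifier over $E\in\DR$. Since the spectra of both restricted Hamiltonians lie in the deterministic interval $I$ of length $O(N)$, I would discretize $E$ over a grid of mesh $\ee^{-L^{1/2}}$, replacing $\exists E\in\DR$ by a polynomial union over grid points and inflating the resonance width by a controlled factor. For each fixed grid point the joint event is bounded by applying part (A) to each cube. When the two cubes are separable in the sense of Definition~\ref{def:separability}, the relevant single-site potentials are disjoint by construction, so the two events are independent and the product of two (A)-type bounds applies. The real obstacle is the non-separable case, where the cubes share particle coordinates: this is handled by invoking Lemma~\ref{lem:separable.distant}(A) to cover the exceptional configurations by a sub-polynomial union of cubes on which a single (A)-type bound already absorbs the grid factor, yielding \eqref{eq:cor.Wegner.2B} with $\tau_2\in(4,\tau)$.
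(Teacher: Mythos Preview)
The paper does not actually prove this theorem: it is quoted from \cites{CS08,E12}, so there is no in-paper proof to compare against. Nonetheless your proposal for part (B) contains a genuine gap that would make the argument fail.

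Your claim that for a separable pair ``the relevant single-site potentials are disjoint by construction, so the two events are independent'' is incorrect. Look again at Definition~\ref{def:separability}: $\CJ$-separability of $\BC^{(n)}_L(\Bx)$ from $\BC^{(n)}_L(\By)$ only guarantees that the $\CJ$-projection $\bigcup_{j\in\CJ}C^{(1)}_L(x_j)$ is disjoint from everything else. The complementary projection $\bigcup_{j\notin\CJ}C^{(1)}_L(x_j)$ may well overlap $\bigcup_j C^{(1)}_L(y_j)$, so the sample sets $\varPi\BC^{(n)}_L(\Bx)$ and $\varPi\BC^{(n)}_L(\By)$ are in general \emph{not} disjoint, and the two CNR events are not independent. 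Consequently you cannot simply multiply two part-(A) bounds after discretizing in $E$.

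What the cited references actually do is a \emph{conditional} Stollmann argument: one conditions on all random variables $\{V(z,\omega):z\notin\bigcup_{j\in\CJ}C^{(1)}_L(x_j)\}$. Under this conditioning the spectrum of $\BH^{(n)}_{\BC^{(n)}_L(\By)}$ is deterministic (a finite set of eigenvalues), and the event in \eqref{eq:cor.Wegner.2B} forces some sub-cube of $\BC^{(n)}_L(\Bx)$ to be resonant with one of those finitely many eigenvalues. Stollmann's lemma applied with the still-free $\CJ$-block of the potential then gives the required bound. This is the essential mechanism you are missing; the energy discretization and the excursion into the ``non-separable case'' via Lemma~\ref{lem:separable.distant} are both unnecessary (the statement, as used in the paper, is only invoked for separable pairs) and do not repair the independence error.

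A secondary issue: in part (A) your union bound produces a leading term of order $\ee^{-L^{1/(2\alpha)}}$ from the smallest admissible sub-cube $\ell=L^{1/\alpha}$. Since $\alpha>1$ in the MSA scheme, this is strictly weaker than $\ee^{-\tau_1 L^{1/2}}$, so the phrase ``once $\alpha$ is chosen appropriately'' hides a real discrepancy with the exponent stated in \eqref{eq:cor.Wegner.2A}. The argument is otherwise standard, but you should be explicit about which stretched-exponential rate actually emerges.
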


\section{The initial MSA bound for the weakly interacting multi-particle system}\label{sec:initial.MSA}

In  this Section we fix $d=1$, i.e., we consider here the one-dimensional multi-particle random Hamiltonian $\BH^{(N)}_{h}(\omega)=-\BDelta+\BV(\Bx,\omega)+h\BU(\Bx)$ acting in the Hilbert space $\ell^{2}(\DZ^N)$.
Now, we aim to prove stability of the MSA bounds from the single-particle lattice systems to multi-particle systems with sufficiently weak interaction in the  interval
\[
I=[-N(4d+M)|h|\|\BU\|,N(4d + M) + |h|\|\BU\|].
\]

Consider the one-dimesional single-particle Hamiltonian
\[
H^{(1)}(\omega)=-\Delta+V(\omega)
\]
with a non-constant i.i.d. random potential $V:\DZ\times \Omega\rightarrow \DR$. Recall that for the
one-dimensional Anderson models we have exponential localization for the eigenfunctions of random Hamiltonians with absolutely continuous probability distributions of bounded densities in \cite{KS80} for example and with H\"older or log-H\"older continuous distributions in \cites{CL90,DK89,FMSS85} and also with singular distributions such as the Bernoulli distributions in one dimension in \cites{CL90,CKM87}. We summarize this result in the following statement directly in  a form suitable for our forthcoming analysis (i.e., the MSA). Set

\[
\Upsilon_{C^{(1)}_{L_0}(u)}(x,y,I;\omega):=\sum_{E_j\in\sigma(H^{(1)}_{C^{(1)}_{L_0}(u)})\cap I}|\psi_j(x)\psi_j(y)|, \text{ for all $x,y\in C^{(1)}_{L_0}(u)$}.
\]
We have the following

\begin{theorem}[Single-particle localization]\label{thm:1p.loc}
There exists a constant $\widetilde{\mu}_1>0$ such that
\begin{equation}\label{eq:EF.corr.bound.1p}
\esm{ \Upsilon_{C^{(1)}_{L_0}(u)}(x,y,I;\omega)} \leq\ee^{-\widetilde{\mu}_1\, |x-y|},
\end{equation}
where $\{E_j,\psi_j\}_{j=1,\cdots,|C^{(1)}_{L_0}(u)|}$ are the eigenvalues and corresponding eigenfunctions of\\ $H^{(1)}_{C^{(1)}_{L_0}(u)}(\omega)$.
\end{theorem}

\subsection{The fixed energy  MSA bound for the n-particle system without interaction}\label{sec:non.int.initial.bound}
The main result of this subsection is Theorem \ref{thm:initial.bound.np} given below.
The proof of Theorem \ref{thm:initial.bound.np} relies on an auxiliary statement formulated below, Lemma \ref{lem:1p.NS.implies.np.NS}. We need to introduce first
\[
\{(\lambda^{(i)}_{j_i},\psi^{(i)}_{j_i}):j_i=1,\ldots,|C^{(1)}_{L_0}(u_i)|\},
\]
the eigenvalues and the corresponding eigenfunctions of $H^{(1)}_{C^{(1)}_{L_0}(u_i)}(\omega)$, $i=1,\ldots,n$. Then the  eigenvalues $E_{j_1\ldots j_n}$ of the non-interacting multiparticle random Hamiltonian\\ $\BH^{(n)}_{\BC^{(n)}_{L_0}(\Bu)}(\omega)$ are written as sums
\[
E_{j_1\ldots j_n}=\sum_{i=1}^n\lambda^{(i)}_{j_i}=\lambda^{(1)}_{j_1}+\cdots+\lambda^{(n)}_{j_n},
\]
while the corresponding eigenfunctions $\BPsi_{j_1\ldots j_n}$ can be chosen as tensor products
\[
\BPsi_{j_1\ldots j_n}=\psi^{(1)}_{j_1}\otimes\cdots\otimes\psi^ {(n)}_{j_n}.
\]
The eigenfunctions of finite volume Hamiltonians are assumed normalised.
\begin{theorem}\label{thm:initial.bound.np}
Let $1\leq n\leq N$ and $\widetilde{\mu}_1>0$ as in Theorem \ref{thm:1p.loc}. Consider $m^*=\min(\frac{1}{2^{N}12Nd},2^{-N-1}\widetilde{\mu}_1)$. Then for all
$E\in I$  and all $\Bu\in\DZ^{n}$:
\begin{equation}\label{eq:initial.bound.np}
\prob{\text{$\BC^{(n)}_{L_0}(\Bu)$ is $(E,m^*,0)$-S}}
\le \frac{1}{2} L_0^{-2p^*4^{N-n}},
\end{equation}
with $L_0$ large enough and $p^*>6Nd$.
\end{theorem}

The proof of Theorem \ref{thm:initial.bound.np} relies on the following auxiliary statement.

\begin{lemma}\label{lem:1p.NS.implies.np.NS}
Let be given $N\geq n\geq 2$, $m^*>0$, a cube
$\BC^{(n)}_{L_0}(\Bu)$ and $E\in \DR$.
Suppose that $\BC^{(n)}_{L_0}(\Bu)$ is $E$-NR, and for any operator
$H^{(1)}_{C_{L_0}(u_i)}$,
all its eigenfunctions $\psi_j$ satisfy
\begin{equation}\label{eq:psi.m.loc}
|\psi_j(u_i)\, \psi_j(u_i \pm L_0)| \le \ee^{-2\gamma(m^*,L_0,n)L_0}.
\end{equation}
Then $\BC^{(n)}_{L_0}(\Bu)$ is $(E,m^*,0)$-NS, provided that $L_0 \ge L_*(m^*,N,d)$.
\end{lemma}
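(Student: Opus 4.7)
The essential observation is that when $h=0$ the $n$-particle Hamiltonian restricted to $\BC^{(n)}_{L_0}(\Bu)=\prod_{i=1}^{n} C^{(1)}_{L_0}(u_i)$ is a sum of commuting one-body operators acting on mutually orthogonal factors, so every eigenpair factors as a tensor product, with eigenvalues $E_{j_1\ldots j_n}=\sum_i\lambda^{(i)}_{j_i}$ and eigenfunctions $\BPsi_{j_1\ldots j_n}(\Bx)=\prod_i\psi^{(i)}_{j_i}(x_i)$, exactly as recalled in the excerpt. The plan is to insert this decomposition into the spectral representation of the Green function and combine the single-coordinate boundary decay \eqref{eq:psi.m.loc} with the trivial bound $|\psi^{(i)}_{j_i}|\le 1$ available on the remaining coordinates.

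Concretely, I would begin from
\begin{equation*}
\BG_{\BC^{(n)}_{L_0}(\Bu),0}(\Bu,\Bv;E)
=\sum_{j_1,\ldots,j_n}\frac{\prod_{i=1}^{n}\psi^{(i)}_{j_i}(u_i)\,\overline{\psi^{(i)}_{j_i}(v_i)}}{\sum_{i=1}^n\lambda^{(i)}_{j_i}-E}.
\end{equation*}
Since $\BC^{(n)}_{L_0}(\Bu)$ is a product of one-dimensional cubes, any $\Bv\in\partial^{-}\BC^{(n)}_{L_0}(\Bu)$ has at least one index $k$ with $v_k=u_k\pm L_0$; for that index hypothesis \eqref{eq:psi.m.loc} yields $|\psi^{(k)}_{j_k}(u_k)\psi^{(k)}_{j_k}(v_k)|\le\ee^{-2\gamma(m^*,L_0,n)L_0}$. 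For every other index I use the pointwise bound $|\psi^{(i)}_{j_i}(\cdot)|\le 1$ valid for any $\ell^2$-normalised eigenfunction on a finite set. Thus each summand's numerator is controlled by $\ee^{-2\gamma(m^*,L_0,n)L_0}$ uniformly in $(j_1,\ldots,j_n)$.

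For the denominators, the $E$-NR hypothesis says exactly $|E_{j_1\ldots j_n}-E|\ge \ee^{-L_0^{1/2}}$ for every multi-index, and the total number of multi-indices is at most $|\BC^{(n)}_{L_0}(\Bu)|=(2L_0+1)^n$. Summing,
\begin{equation*}
\bigl|\BG_{\BC^{(n)}_{L_0}(\Bu),0}(\Bu,\Bv;E)\bigr|
\le (2L_0+1)^n\,\ee^{L_0^{1/2}}\,\ee^{-2\gamma(m^*,L_0,n)L_0}.
\end{equation*}
Because $\gamma(m^*,L_0,n)L_0\ge m^* L_0$ grows linearly in $L_0$ while the combined correction $L_0^{1/2}+n\log(2L_0+1)$ is sublinear, the right-hand side is dominated by $\ee^{-\gamma(m^*,L_0,n)L_0}$ for all $L_0\ge L_*(m^*,N,d)$, giving the $(E,m^*,0)$-NS bound \eqref{eq:singular}.

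The argument is essentially a direct tensor-product calculation, and I do not expect a real conceptual obstacle; the only delicate point is the numerology check that the near-resonance loss $\ee^{L_0^{1/2}}$ together with the counting factor $(2L_0+1)^n$ fits inside the gap between $2\gamma(m^*,L_0,n)L_0$ and $\gamma(m^*,L_0,n)L_0$. This is precisely where the specific form $\gamma(m,L,n)=m(1+L^{-1/8})^{N-n+1}$ together with the threshold $L_*(m^*,N,d)$ is required.
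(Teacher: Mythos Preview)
Your proof is correct and follows essentially the same route as the paper's: both expand the non-interacting Green function in the tensor-product eigenbasis, apply the single-particle localization bound \eqref{eq:psi.m.loc} on the coordinate that reaches the boundary, bound the remaining factors by $1$, control the denominators via the $E$-NR hypothesis, and absorb the counting factor $(2L_0+1)^{n}$ together with the resonance loss $\ee^{L_0^{1/2}}$ into the slack between $2\gamma L_0$ and $\gamma L_0$. The only cosmetic difference is that the paper first factors the cube as $\BC^{(n-1)}_{L_0}(\Bu')\times C^{(1)}_{L_0}(u_i)$ and writes the result in terms of a one-particle Green function before expanding, whereas you expand in all $n$ coordinates at once; your presentation is in fact slightly more transparent.
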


\begin{proof}

Fix any $\By\in\partial^-\BC^{(n)}_{L_0}(\Bu)$. There exists $i\in[1,n]$ such that $|u_i-y_i|=L_0$.
Decompose the cube $\BC^{(n)}_{L_0}(\Bu)$ as follows:
$\BC^{(n)}_{L_0}(\Bu) = \BC^{(n-1)}_{L_0}(\Bu') \times \BC^{(1)}_{L_0}(u_i)$, $\Bu'\in\DZ^{(n-1)d}$.
In a similar way, we factorize every eigenfunction,
$\BPsi_k(\Bu) = \BPsi_{k'}(\Bu') \, \psi_i(u_i)$, and the respective eigenvalue,
$E_k = E_{\ne i} + \lambda^{(i)}_{j_i}$.
Now we have that
\[
\begin{aligned}
\BG_{\BC^{(n)}_{L_0}(\Bu)}(\Bu,\By;E)
= \sum_{k'} \BPsi_{k'}(\Bu') \BPsi_{k'}(\By')\; G^{(1)}_{C^{(1)}_{L_0}(u_i)}(u_i, y_i; E-E_{\neq i}).
\end{aligned}
\]
Here $\|\BPsi_{k'}\|_\infty \le 1$, since $\|\BPsi_k\|_2=1$, therefore,
\begin{align*}
\big| \BG_{\BC^{(n)}_{L_0}(\Bu)}(\Bu,\By;E) \big|
\le (2L_0+1)^{nd}\frac{\ee^{- 2\gamma(m^*,L_0,n)L_0 }}{ \ee^{- L_0^{1/2}}}
\le \ee^{- \gamma(m^*,L_0,n)L_0 },
\end{align*}
for $L$ large enough (depending on $ m^*, N,d$).
\end{proof}

\begin{proof}[Proof of Theorem \ref{thm:initial.bound.np}]
Introduce the events
\begin{align*}
&\mathcal{N} := \{\exists j=1,\ldots,n: \exists E_j\in\sigma(H^{(1)}_{C^{(1)}_{L_0}(u_j)}(\omega)): \text{ $|\phi_j(u_i)\phi_j(u_i\pm L_0)|>\ee^{-2\gamma(m^*,L_0,n)L_0}$} \},
\\
&\CR := \{ \BC^{(n)}_{L_0}(\Bu) \text{ is $E$-R }  \}.
\end{align*}
Then by Lemma \ref{lem:1p.NS.implies.np.NS}, Theorem \ref{thm:1p.loc} and theorem \ref{thm:Wegner} (A), we have:
\begin{align*}
\prob{\BC^{(n)}_{L_0}(\Bu) \text{ is $(E,m^*,0)$-S}}
&\le \prob{\mathcal{N}} + \prob{\CR},\\
&\leq  2\cdot n\cdot(2L_0+1)^{2nd}\frac{\esm{ \Upsilon_{C^{(1)}_{L_0}(u_i)}(u_i,u_i \pm L_0,I;\omega)}}{\ee^{-2\gamma(m^*,L_0,n)L_0}}+\prob{\CR}\\
&\leq \ee^{(-\widetilde{\mu}_1+2\gamma(m^*,L_0,n))L_0}+ L_0^{-4^N\,p}\leq \frac{1}{2}L_0^{-p^*\,4^{N-n}}
\end{align*}
for some $p^*>6Nd$, since  $2\gamma(m^*,L_0,n)< 2^{N+1}m^*\leq \widetilde{\mu}_1$.     
\end{proof}

In the next subsection, $m^*>0$ is the constant from Theorem \ref{thm:initial.bound.np}.
\subsection{The fixed energy  MSA bound for weakly interacting multi-particle systems}\label{sec:weak.int.initial.bound}
Now we derive the required initial estimate from its counterpart established for non-interacting  systems.

\begin{theorem}\label{thm:weak.interaction.fixed}
Let $1\leq n\leq N$. Suppose that the Hamiltonians $\BH_{0}^{(n)}(\omega)$ (without inter-particle interaction) fulfills the following condition: for all $E\in I$ and all $\Bu\in\DZ^{nd}$
\begin{equation}\label{eq:lem.weak.U.1}
\prob{\BC_{L_0}^{(n)}(\Bu)\text{ is $(E,m^*,0)$-S}}\leq\frac{1}{2} L_0^{-2p^*4^{N-n}},\qquad\text{with } 
p^*>6Nd.
\end{equation}
Then there exists $h^*>0$ such that for all $h\in(-h^*, h^*)$ the Hamiltonian $\BH_{h}^{(n)}(\omega)$, with interaction of amplitude $|h|$, satisfies a similar bound: there exist some $p>6Nd, m>0$ such that for all $E\in I$ and all $\Bu\in\DZ^{nd}$
\[
\prob{\BC_{L_0}^{(n)}(\Bu)\text{ is $(E,m,h)$-S}}\leq\frac{1}{2} L_0^{-2p\,4^{N-n}}.
\]
\end{theorem}

\begin{proof} First observe that the result of \eqref{eq:lem.weak.U.1} is proved in the statemeent of Theorem \ref{thm:initial.bound.np}.
Set
\[
\BG_{\BC_{L_0}^{(n)}(\Bu),h}(E)=(\BH^{(n)}_{\BC_{L_0}^{(n)}(\Bu),h}-E)^{-1}, \; h\in\DR.
\]
By definition, a cube $\BC_{L_0}^{(n)}(\Bu)$ is $(E,m^*,0)$-NS iff
\begin{equation}\label{eq:proof.lem.weak.U.0}
\max_{\By\in\partial^-\BC_{L_0}^{(n)}(\Bu)}\,\Bigl|\BG_{\BC_{L_0}^{(n)}(\Bu),0}(\Bu,\By;E)\Bigr|\leq\ee^{-m^*(1 + L_0^{-1/8})^{N-n+1}L_0}.
\end{equation}
Therefore, there exists sufficiently small $\epsilon>0$ such that
\begin{equation}\label{eq:proof.lem.weak.U.1}
\max_{\By\in\partial^-\BC_{L_0}^{(n)}(\Bu)}\,\Bigl|\BG_{\BC_{L_0}^{(n)}(\Bu),0}(\Bu,\By;E)\Bigr|\leq\ee^{-m(1+L_0^{-1/8})^{N-n}L_0}-\epsilon,
\end{equation}
where $m= m^*/2>0$. Since, by assumption, $p^*>6Nd$, there exists $6Nd<p<p^*$ and $\tau>0$ such that $L_0^{-2p4^{N-n}}-\tau>L_0^{-2p^*4^{N-n}}$. With such values $p$ and $\tau$, inequality \eqref{eq:lem.weak.U.1} with $p^*>6Nd$ implies
\begin{equation}\label{eq:lem.weak.U.p.prime}
\DP\{\BC_{L_0}^{(n)}(\Bu)\text{ is $(E,m^*,0)$-S}\}<\frac{1}{2}L_0^{-2p4^{N-n}}-\frac{1}{2}\tau.
\end{equation}
Next, it follows from the second resolvent identity that
\begin{equation}\label{eq:proof.lem.weak.U.2}
\|\BG_{\BC_{L_0}^{(n)}(\Bu),0}(E) - \BG_{\BC_{L_0}^{(n)}(\Bu),h}(E) \|
\leq |h|\, \|\BU\| \cdot \|\BG_{\BC_{L_0}^{(n)}(\Bu),0}(E)\| \cdot \|\BG_{\BC_{L_0}^{(n)}(\Bu),h}(E)\|.
\end{equation}
By Theorem \ref{thm:Wegner}, applied to Hamiltonians
$\BH^{(n)}_{\BC_{L_0}^{(n)}(\Bu),0}$ and $\BH_{\BC_{L_0}^{(n)}(\Bu),h}^{(n)}$, for any $\tau>0$ there is $B(\tau)\in(0,+\infty)$ such that
\begin{align*}
&\DP\bigl\{\|\BG_{\BC_{L_0}^{(n)}(\Bu),0}(E)\|\geq B(\tau)\bigr\}\leq\frac{\tau}{4}\,,\\
&\DP\bigl\{\|\BG_{\BC_{L_0}^{(n)}(\Bu),h}(E)\|\geq B(\tau)\bigr\}\leq\frac{\tau}{4}\,.
\end{align*}
Therefore,
\begin{align*}
&\DP\bigl\{\|\BG_{\BC_{L_0}^{(n)}(\Bu),0}(E)-\BG_{\BC_{L_0}^{(n)}(\Bu),h}(E)\|
\geq |h|\,\|\BU\|B^{2}(\tau)\bigr\}\\
&\qquad\leq\DP\bigl\{\|\BG_{\BC_{L_0}^{(n)}(\Bu),0}(E)\|\geq B(\tau)\bigr\}+\DP\bigl\{\|\BG_{\BC_{L_0}^{(n)}(\Bu),h}(E)\|\geq B(\tau)\bigr\}\\
&\qquad\leq 2\,\frac{\tau}{4}=\frac{\tau}{2}\,.
\end{align*}
Set $h^*:=\frac{\epsilon}{2\|\BU\|(B(\tau))^2}>0$. We see that if $|h|\leq h^*$, then $|h|\times\|\BU\|\times(B(\tau))^2\leq\frac{\epsilon}{2}\,$.
Hence,
\begin{equation}\label{eq.proof.lem.weak.U.3}
\DP\bigl\{\|\BG_{\BC_{L_0}^{(n)}(\Bu),0}-\BG_{\BC_{L_0}^{(n)}(\Bu),h}\|\geq\frac{\epsilon}{2}\bigr\}\leq 2\,\frac{\tau}{4}\,.
\end{equation}
Combining \eqref{eq:proof.lem.weak.U.1}, \eqref{eq:lem.weak.U.p.prime}, and
\eqref{eq.proof.lem.weak.U.3}, we obtain that for all $E\in I$
\begin{align*}
&\DP\bigl\{ \BC_{L_0}^{(n)}(\Bu) \text{ is $(E,m,h)$-S}\bigr\}\\
&\quad\leq\DP\bigl\{\BC_{L_0}^{(n)}(\Bu)\text{ is $(E,m^*,0)$-S}\bigr\}\\
&+\DP\bigl\{\|\BG_{\BC_{L_0}^{(n)}(\Bu),0}(E)-\BG_{\BC_{L_0}^{(n)}(\Bu),h}(E)\|\geq\frac{\epsilon}{2}\bigr\}\\
&\quad\leq\bigl(\frac{1}{2}L_0^{-2p4^{N-n}}-\frac{1}{2}\tau\bigr)+\frac{\tau}{2}=\frac{1}{2}L_0^{-2p'4^{N-n}}.\qedhere
\end{align*}
\end{proof}

\subsection{The variable energy MSA bound for weakly interacting multi-particle systems}
Here, we deduce from the fixed energy bound, the variable energy initial multi-scale analysis bound for the weakly interacting multi-particle system. 

Recall that in order to prove the complete Anderson localization for the perturbed Hamiltonian $\BH^{(n)}_h(\omega)$, we are concern with the interval
\[
I:=[-N(4d+M)|h|\|\BU\|;N(4d+M)|h|\|\BU\|],
\]
which contains its entire spectrum.
To do so, in fact we will prove localization in each compact interval $I_0$ of the following form: let $E_0\in I$ and $\delta:=\frac{1}{2}\ee^{-2L_0^{1/2}}(\ee^{-m_1L_0}-\ee^{-mL_0})$ where $0<m_1<m$ by definition. Set
\[
I_0:=[E_0-\delta;E_0+\delta].
\]
The result on the variable energy MSA is given below in
\begin{theorem}\label{thm:initial.var.energy}
Let $1\leq n\leq N$. For any $\Bu\in\DZ^{nd}$ we have
\begin{equation}
\prob{\exists E\in I_0: \text{$\BC^{(n)}_{L_0}(\Bu)$ is $(E,m_1)$-S}}\leq L_0^{-2p\,4^{N-n}},
\end{equation}
for some $m_1>0$.
\end{theorem}

\begin{proof}
Let $E_0\in I$. By the resolvent equation
\[
\BG_{\BC^{(n)}_{L_0}(\Bu),h}(E)=\BG_{\BC^{(n)}_{L_0}(\Bu),h}(E_0)+ (E-E_0)\BG_{\BC^{(n)}_{L_0}(\Bu),h}(E)\BG_{\BC^{(n)}_{L_0}(\Bu),h}(E_0).
\]
If $\dist(E_0,\sigma(\BH^{(n)}_{\BC^{(n)}_{L_0}(\Bu),h}))\geq \ee^{-L_0^{1/2}}$ and $|E-E_0|\leq \frac{1}{2} \ee^{-L_0^{1/2}}$, then $\dist(E,\sigma(\BH^{(n)}_{\BC^{(n)}_{L_0}(\Bu),h}))\geq \frac{1}{2} \ee^{-L_0^{1/2}}$.

If in addition, $\BC^{(n)}_{L_0}(\Bu)$ is $(E_0,m,h)$-NS and $\By\in\partial^-\BC^{(n)}_{L_0}(\Bu)$, then 
\[
|\BG_{\BC^{(n)}_{L_0}(\Bu),h}(\Bu,\By;E)|\leq \ee^{-m(1+L_0^{-1/8})^{N-n+1}L_0} + 2 |E-E_0| \ee^{2L_0^{1/2}}.
\]
Therefore, for $m_1=\frac{m}{2}$, if we put 
\[
\delta=\frac{1}{2}\ee^{-2L_0^{1/2}}(\ee^{-m_1(1+L_0^{-1/8})^{N-n+1}L_0}- \ee^{-m(1+L_0^{-1/8})^{N-n+1}L_0}),\quad I_{\delta}=[E_0-\delta,E_0+\delta],
\]
we have that 
\begin{align*}
&\prob{\text{$\exists E\in I_0$, $\BC^{(n)}_{L_0}(\Bu)$ is $(E,m_1,h)$-S}}\leq \prob{\text{$\BC^{(n)}_{L_0}(\Bu)$ is $(E_0,m,h)$-S}}\\
&\qquad + \prob{\dist(E_0,\sigma(\BH^{(n)}_{\BC^{(n)}_{L_0}(\Bu)}))\leq \ee^{-L_0^{1/2}}}\\
&\leq \frac{1}{2} L_0^{-2p4^{N-n}} + L_0^{-p4^{N}}<L_0^{-2p4^{N-n}}.
\end{align*}
We used Theorem \ref{thm:weak.interaction.fixed} to bound the first term and the Wegner estimate Theorem \ref{thm:Wegner} (A) to bound the other term.
\end{proof}

\section{Multiscale induction}\label{sec:MSA.induction}

In the rest of the paper, we assume that  $n\geq 2$ and  $I_0$ is the interval from the previous section. Some parts of this section overlap with the paper \cite{E12}.

Recall the following facts from \cite{E12}: Consider a cube $\BC^{(n)}_{L}(\Bu)$, with $\Bu=(u_1,\ldots,u_n)\in(\DZ^d)^n$.  We define
\[
\varPi\Bu=\{u_1,\ldots,u_n\},
\]
and 
\[
\varPi\BC^{(n)}_L(\Bu)=C^{(1)}_L(u_1)\cup\cdots\cup C^{(1)}_L(u_n).
\]
\begin{definition}
Let $L_0>3$ be a constant and $\alpha=3/2$. We define the sequence $\{L_k: k\geq 1\}$ recursively  as follows:
\[
L_k:=\lfloor L_{k-1}^{\alpha}\rfloor +1, \qquad \text{for all $k\geq 1$}.
\]
\end{definition}

Let $m>0$ a positive constant, we also introduce the following property, namely the multi-scale analysis bounds at any scale length $L_k$, and for any pair of separable cubes $\BC^{(n)}_{L_k}(\Bu)$ and $\BC^{(n)}_{L_k}(\Bv)$,
\begin{dsknn*}
\[
\prob{\exists E\in I_0: \text{$\BC^{(n)}_{L_k}(\Bu)$ and $\BC^{(n)}_{L_k}(\Bv)$ are $(E,m)$-S}}\leq L_k^{-2p4^{N-n}},
\]
where $p>6Nd$.
\end{dsknn*}

In both the single-particle and the multi-particle system, given the results on the multi-scale analysis property $\dsknn$ above one can deduce the localization results see for example the papers \cites{DK89,DS01} for those concerning the single-particle case and \cites{E12,CS09a} for multi-particle systems. We summarize below the exponential  decay bound of the eigenfunction correlators  of finite volumes Hamiltonians which is naturally deduced from the exponential decay bound of the eigenfunctions of the random Hamiltonian in the entire space $\DZ^{nd}$. Before, introduce

\begin{definition}\label{def:upsilon}
 For any cube $\BC^{(n)}_L(\Bu)$ and any $\Bx,\By\in\DZ^{nd}$, put
\[
\mathbf{\Upsilon}_{\BC^{(n)}_{L}(\Bu)}(\Bx,\By,I_0):=\sum_{E_j\in\sigma(\BH^{(n)}_{\BC^{(n)}_{L}(\Bu)})\cap I_0}|\mathbf{\psi}_j(\Bx)\mathbf{\psi}_j(\By)|,
\]
where $\{E_j,\psi_j\}_{j=1,\cdots,|\BC^{(n)}_{L}(u)|}$ are the eigenvalues and corresponding eigenfunctions of $\BH^{(n)}_{\BC^{(n)}_{L}(\Bu)}(\omega)$.
\end{definition}

\begin{theorem}\label{thm:EFC.bound}
For any $1\leq n'<n$, assume that property $\dsk{n',N}$ holds true forall $k\geq 0$, then there exists a constant $\widetilde{\mu}_{n'}>0$ such that for any cube $\BC^{(n')}_{L}(\Bu')$
\begin{equation}\label{eq:EFC.bound}
\esm{ \mathbf{\Upsilon}_{\BC^{(n')}_{L}(\Bu')}(\Bx,\By,I_0;\omega)} \leq\ee^{-\widetilde{\mu}_{n'}\, |\Bx-\By|},\qquad\text{ for all $\Bx,\By\in\BC^{(n')}_{L}(\Bu')$}.
\end{equation}
\end{theorem}

Recall  the constant $m_1>0$ from the previous section,  Theorem \ref{thm:initial.var.energy}. For the rest of the analysis, we will need the sequence $\{m_n: n\geq 1\}$ defined in
\begin{definition}\label{def:m_n}
Given $m_1>0$ and $n\geq 2$, define $m_n$ as follows:
\[
m_n=\min_{1\leq n'\leq n-1} \{m_1,\, 2^{-N-1}\widetilde{\mu}_{n'}\},
\]
where each $\widetilde{\mu}_{n'}$ is given in the statement of Theorem \ref{thm:EFC.bound}.
\end{definition}

\begin{definition}[fully/partially interactive]\label{def:diagonal.cubes}
An $n$-particle cube $\BC_L^{(n)}(\Bu)\subset\DZ^{nd}$ is
called fully interactive (FI) if
\begin{equation}\label{eq:def.FI}
\diam \varPi \Bu := \max_{i\ne j} |u_i - u_j| \le n(2L+r_0),
\end{equation}
and partially interactive (PI) otherwise.
\end{definition}

The following simple statement clarifies the notion of PI cube.

\begin{lemma}\label{lem:PI.cubes}[\cite{E12}]
If a cube $\BC_L^{(n)}(\Bu)$ is PI, then there exists a subset $\CJ\subset\left\{1,\dots,n\right\}$ with $1\leq\card\CJ\leq n-1$ such that
\[
\dist\left(\varPi_{\CJ}\BC_L^{(n)}(\Bu),\varPi_{\CJ^{\comp}}\BC_L^{(n)}(\Bu)\right)>r_0,
\]
\end{lemma}

If $\BC^{(n)}_L(\Bu)$ is a PI cube by the above Lemma, we can write it as 
\begin{equation}\label{eq:cartesian.cubes}
\BC_L^{(n)}(\Bu)=\BC_L^{(n')}(\Bu')\times\BC_L^{(n'')}(\Bu''),
\end{equation}
with 
\begin{equation}\label{eq:cartesian.cubes.2}
\dist\left(\varPi\BC_L^{(n')}(\Bu'),\varPi\BC_L^{(n'')}(\Bu'')\right)>r_0,
\end{equation}
where $\Bu'=\Bu_{\CJ}=(u_j:j\in\CJ)$, $\Bu''=\Bu_{\CJ^{\comp}}=(u_j:j\in\CJ^{\comp})$, $n'=\card \CJ$ and $n''=\card \CJ^{\comp}$. 

Throughout, when we write a PI cube $\BC^{(n)}_L(\Bu)$ in the form \eqref{eq:cartesian.cubes}, we implicitly assume that the projections satisfy \eqref{eq:cartesian.cubes.2}.
Let $\BC^{(n')}_{L_k}(\Bu')\times\BC^{(n'')}_{L_k}(\Bu'')$ be the decomposition of the PI cube $\BC^{(n)}_{L_k}(\Bu)$ and $\{\lambda_i,\varphi_i\}$ and $\{\mu_j,\phi_j\}$ be the eigenvalues and corresponding eigenfunctions of $\BH_{\BC_{L_k}^{(n')}(\Bu')}^{(n')}$ and $\BH_{\BC_{L_k}^{(n'')}(\Bu'')}^{(n'')}$ respectively. The eigenfunctions $\{\varphi_i\}$ and $\{\phi_j\}$ are  assumed to form a basis of the Hilbert spaces $\ell^{2}(\BC^{(n')}_{L}(\Bu'))$ and $\ell^{2}(\BC^{(n'')}_L(\Bu''))$ respectively.  Next, we  can choose the eigenfunctions $\BPsi_{ij}$ of $\BH_{\BC^{(n)}_{L_k}(\Bu)}(\omega)$ as tensor products:
\[
\BPsi_{ij}=\varphi_i\otimes\phi_j
\]
therefore, the set $\{\BPsi_{ij}\}$  also forms a basis of the tensor product of Hilbert spaces 
\[
\ell^{2}(\BC^{(n')}_L(\Bu'))\otimes\ell^{2}(\BC^{(n'')}_L(\Bu''))\cong\ell^2(\BC^{(n)}_L(\Bu)).
\]
The eigenfunctions appearing in subsequent arguments and calculation will be assumed normalized.

Now we turn to geometrical properties of FI cubes.
\begin{lemma}\label{lem:disjointness}[\cite{E12}]
Let $n\geq 1$, $L>2r_0$ and consider two FI cubes $\BC_L^{(n)}(\Bx)$ and $\BC_L^{(n)}(\By)$ with $|\Bx-\By|>7\,nL$. Then
\begin{equation}
\varPi\BC_L^{(n)}(\Bx)\cap\varPi\BC_L^{(n)}(\By)=\varnothing.
\end{equation}
\end{lemma}

Given an $n$-particle cube $\BC_{L_{k+1}}^{(n)}(\Bu)$ and $E\in\DR$, we denote
\begin{itemize}
\item
by $M_{\pai}^{\sep}(\BC_{L_{k+1}}^{(n)}(\Bu),E)$ the maximal number of pairwise separable, 
$(E,m_n)$-singular PI cubes $\BC_{L_k}^{(n)}(\Bu^{(j)})\subset\BC_{L_{k+1}}^{(n)}(\Bu)$;
\item
by  $M_{\pai}(\BC_{L_{k+1}}^{(n)}(\Bu),E)$ the maximal number of (not necessarily separable)
$(E,m_n)$-singular PI cubes $\BC^{(n)}_{L_k}(\Bu^{(j)})$ contain in $\BC^{(n)}_{L_{k+1}}(\Bu)$ with $|\Bu^{(j)}-\Bu^{(j')}|>7NL_k$ for all $j\neq j'$;
\item
by $M_{\fui}(\BC_{L_{k+1}}^{(n)}(\Bu),E)$ the maximal number of 
$(E,m_n)$-singular FI cubes  $\BC_{L_k}^{(n)}(\Bu^{(j)})\subset \BC_{L_{k+1}}^{(n)}(\Bu)$ with $|\Bu^{(j)}-\Bu^{(j')}|>7NL_k$ for all $j\neq j'$\footnote{Note that by lemma \ref{lem:disjointness}, two FI cubes $\BC^{(n)}_{L_k}(\Bu^{(j)})$ and $\BC^{(n)}_{L_k}(\Bu^{(j')})$ with $|\Bu^{(j)}-\Bu^{(j')}|>7NL_k $ are automatically separable.},
\item
$M_{\pai}(\BC_{L_{k+1}}^{(n)}(\Bu),I):=\sup_{E\in I}M_{\pai}(\BC_{L_{k+1}}^{(n)}(\Bu),E)$.
\item
$M_{\fui}(\BC_{L_{k+1}}^{(n)}(\Bu),I):=\sup_{E\in I}M_{\fui}(\BC_{L_{k+1}}^{(n)}(\Bu),E)$.
\item
by $M(\BC_{L_{k+1}}^{(n)}(\Bu),E)$ the maximal number of 
$(E,m_n)$-singular cubes  $\BC_{L_k}^{(n)}(\Bu^{(j)})\subset \BC_{L_{k+1}}^{(n)}(\Bu)$ with $\dist(\Bu^{(j)},\partial^-\BC^{(n)}_{L_{k+1}}(\Bu))\geq 2L_k$ and $|\Bu^{(j)}-\Bu^{(j')}|>7NL_k$ for all $j\neq j'$.
\item
by $M^{\sep}(\BC_{L_{k+1}}^{(n)}(\Bu),E)$ the maximal number of pairwise separable
$(E,m_n)$-singular cubes  $\BC_{L_k}^{(n)}(\Bu^{(j)})\subset \BC_{L_{k+1}}^{(n)}(\Bu)$ 
\end{itemize}
Clearly
\[
M_{\pai}(\BC_{L_{k+1}}^{(n)}(\Bu),E)+M_{\fui}(\BC_{L_{k+1}}^{(n)}(\Bu),E)\geq M(\BC_{L_{k+1}}^{(n)}(\Bu),E).
\]

\subsection{Pairs of partially interactive cubes}\label{ssec:PI.cubes}
Let  $\BC_{L_{k+1}}^{(n)}(\Bu)=\BC^{(n')}_{L_{k+1}}(\Bu')\times\BC^{(n'')}_{L_{k+1}}(\Bu'')$ be a PI-cube. We also write $\Bx=(\Bx',\Bx'')$ for any point $\Bx\in\BC_{L_{k+1}}^{(n)}(\Bu)$, in the same way as $\Bu=(\Bu',\Bu'')$. So  the corresponding Hamiltonian $\BH^{(n)}_{\BC_{L_{k+1}}^{(n)}(\Bu)}$ is written in the form:
\begin{equation}\label{eq:decomp,H}
\BH_{\BC_{L_{k+1}}^{(n)}(\Bu)}^{(n)}\BPsi(\Bx)=(-\BDelta\BPsi)(\Bx)+\left[\BU(\Bx')+\BV(\Bx',\omega)+\BU(\Bx'')+\BV(\Bx'',\omega)\right]\BPsi(\Bx)
\end{equation}
or, in compact form
\[
\BH_{\BC_{L_{k+1}}^{(n)}(\Bu)}^{(n)}=\BH_{\BC_{L_{k+1}}^{(n')}(\Bu')}^{(n')}\otimes\mathbf{I}+ \mathbf{I}\otimes \BH_{\BC_{L_{k+1}}^{(n'')}(\Bu'')}^{(n'')}.
\]
We denote by $\BG^{(n')}(\Bu',\Bv';E)$ and $\BG^{(n'')}(\Bu'',\Bv'';E)$ the corresponding Green functions, respectively. 

\begin{definition}\label{def:localized}
Let $n\geq 2$ and  $\BC^{(n')}_{L_k}(\Bu')\times\BC^{(n'')}_{L_k}(\Bu'')$ be the decomposition of the PI cube $\BC^{(n)}_{L_k}(\Bu)$. Then  $\BC^{(n)}_{L_k}(\Bu)$ is called
\begin{enumerate}[(i)]
\item
$m_n$-left-localized if for any $\By'\in\partial^{-}\BC^{(n')}_{L_k}(\Bu')$ and any normalized eigenfunction $\varphi^{(n')}$ of the restricted hamiltonian $\BH^{(n')}_{\BC^{(n')}_{L_k}(\Bu')}(\omega)$, we have
\[
|\varphi^{(n')}(\Bu')\varphi^{(n')}(\By')|\leq \ee^{-2\gamma(m_n,L_k,n')L_k},
\]
otherwise, it is called $m_n$-non-left-localized,
\item
$m_n$-right-localized if for any $\By''\in\partial^{-}\BC^{(n'')}_{L_k}(\Bu'')$ and any normalized eigenfunction $\phi^{(n'')}$ of the restricted hamiltonian $\BH^{(n'')}_{\BC^{(n'')}_{L_k}(\Bu'')}(\omega)$, we have
\[
|\phi^{(n'')}(\Bu'')\varphi^{(n'')}(\By'')|\leq \ee^{-2\gamma(m_n,L_k,n'')L_k},
\]
otherwise, it is called $m_n$-non-right-localized,
\item
$m_n$-localized if  it is $m_n$-left-localized and $m_n$-right-localized.
Otherwise it is called $m_n$-non-localized. 
\end{enumerate}
\end{definition}

\begin{lemma} \label{lem:localized}
Let $E\in I$ and $\BC_{L_k}^{(n)}(\Bu)$ be a PI cube. Assume that
$\BC_{L_k}^{(n)}(\Bu)$ is $E$-NR and $m_n$-localized.
Then $\BC^{(n)}_{L_k}(\Bu)$ is $(E,m_n)$-NS.
\end{lemma}

\begin{proof}
Let $\BC^{(n')}_{L_k}(\Bu')\times\BC^{(n'')}_{L_k}(\Bu'')$ be the decomposition of the PI cube $\BC^{(n)}_{L_k}(\Bu)$.
Let $\{\lambda_i,\varphi_i\}$ and $\{\mu_j,\phi_j\}$ be the eigenvalues and corresponding eigenfunctions of $\BH_{\BC_{L_k}^{(n')}(\Bu')}^{(n')}$ and $\BH_{\BC_{L_k}^{(n'')}(\Bu'')}^{(n'')}$ respectively. Then we can choose the eigenfunctions $\BPsi_{ij}$ and corresponding eigenvalues $E_{ij}$ of $\BH_{\BC^{(n)}_{L_k}(\Bu)}(\omega)$ as follows.
\[
\BPsi_{ij}=\varphi_i\otimes\phi_j,\qquad E_{ij}=\lambda_i+\mu_j.
\]
For any $\By\in\partial^{-}\BC^{(n)}_{L_k}(\Bu)$, we have that either $|\Bu'-\By'|=L_k$ or $|\Bu''-\By''|=L_k$. Hence, we can assume without lost of generality that $|\Bu''-\By''|=L_k$ if $\By\in\partial^-\BC^{(n)}_{L_k}(\Bu)$. For the Green's functions the following equation holds true since the $n$-particle eigenfunctions are tensor products of those on the underlying sub-systems of the PI cube,
\[
\BG^{(n)}_{\BC^{(n)}_{L_k}(\Bu)}(\Bu,\By;E)=\frac{\sum_{E_{ij}\varphi_i(\Bu')\varphi_i(\By')\phi_j(\Bu'')\phi_j(\By'')}}{E-E_{ij}}.
\]
Hence, since the eigenfunctions appearing in arguments and calculation are assumed normalized, we finally get
\begin{align*}
|\BG^{(n)}_{\BC^{(n)}_{L_k}(\Bu)}(\Bu,\By;E)|&\leq(2L_k+1)^{nd}\frac{\ee^{-2\gamma(m_n,L,n)L_k}}{\ee^{-L_k^{1/2}}}\\
& \leq \ee^{-\gamma(m_n,L,n)L_k},
\end{align*}
for $L_0$ and hence $L_k$ large enough.
\end{proof}
Now, before proving the main result of this subsection concerning the probability of two PI cubes to be singular at the same energy, we need first to estimate the one of a non localized cube given in the statement below. 

\begin{lemma}\label{lem:prob.localized}
Let $\BC^{(n)}_{L_k}(\Bu)$ be a PI cube. Then
\[
\prob{ \BC^{(n)}_{L_k}(\Bu) \text{ is $m_n$-non localized}}\leq \frac{1}{2}L_k^{-4p\,4^{N-n}}. 
\]
\end{lemma}
\begin{proof}
Let $\BC^{(n')}_{L_k}(\Bu')\times\BC^{(n'')}_{L_k}(\Bu'')$ be the decomposition of the PI cube $\BC^{(n)}_{L_k}(\Bu)$. Definition \ref{def:localized} implies that
\begin{align*}
\prob{\BC^{(n)}_{L_k}(\Bu) \text{ is $m_n$-non-localized}}&\leq \prob{\BC^{(n)}_{L_k}(\Bu) \text{ is $m_n$-non-left-localized}}\\
&\qquad \quad + \prob{\BC^{(n)}_{L_k}(\Bu) \text{ is $m_n$-non-right-localized}}.
\end{align*}
We estimate below only the first probability in the right hand side of the above equation since the second one can be settled similarly. Using the result of Theorem \ref{thm:EFC.bound}, the definition of the sequence $m_n$ from Definition \ref{def:m_n} and the Markov's inequality we get:
\begin{gather*}
\prob{\BC^{(n)}_{L_k}(\Bu) \text{ is $m_n$-non-left-localized}}\\
\qquad\leq \prob{\text{$ \exists \By'\in\partial^-\BC^{(n')}_{L_k}(\Bu')$, $\exists \lambda_i\in\sigma(\BH^{(n')}_{\BC^{(n')}_{L_k}(\Bu')})$: $|\varphi_i(\Bu')\varphi_i(\By')|>\ee^{-2\gamma(m_n,L_k,n')L_k}$}}\\
\qquad\leq 2\cdot(2L_k+1)^{2nd}\cdot\frac{\esm{\mathbf{\Upsilon}^{(n')}_{\BC^{(n')}_{L_k}(\Bu')}(\Bu',\By',\omega)}}{\ee^{-2\gamma(m_n,L_k,n')L_k}}\\
\qquad\leq2\cdot(2L_k+1)^{2nd}\cdot \ee^{-(\widetilde{\mu}_{n'}-2\gamma(m_n,L_k,n'))L_k}\\
\qquad\leq L_k^{-p\,4^{N-n'}}\\
\qquad\leq \frac{1}{4} L_k^{-p\,4^{N-(n-1)}}<\frac{1}{4} L_k^{-4p\,4^{N-n}},
\end{gather*}
for $L_0$ and hence $L_k$ large enough, since $2\gamma(m_n,L_k,n')<2^{N+1}m_n\leq \widetilde{\mu}_{n'}$. Finally, the sum of the two probabilities leads to the required result.
\end{proof}

Now, we state the main result of this subsection, i.e., the probability bound of two PI cubes to be singular at the same energy belonging to the  compact interval $I_0$ introduced at the beginning of the section.

\begin{theorem}\label{thm:partially.interactive}
Let $2\leq n\leq N$. There exists $L_1^*=L_1^*(N,d)>0$ such that if $L_0\geq L_1^*$ and if for $k\geq 0$ $\dsn{k,n'}$ holds true for any $1\leq n'<n$, then $\dskonn$ holds true
for any pair of separable PI cubes $\BC_{L_{k+1}}^{(n)}(\Bx)$ and $\BC_{L_{k+1}}^{(n)}(\By)$.
\end{theorem}

\begin{proof}
Let $\BC_{L_{k+1}}^{(n)}(\Bx)$ and $\BC_{L_{k+1}}^{(n)}(\By)$ be two separable PI-cubes. Consider the events:
\begin{align*}
\rB_{k+1}
&=\bigl\{\exists\,E\in I_0:\BC_{L_{k+1}}^{(n)}(\Bx)\text{ and $\BC_{L_{k+1}}^{(n)}(\By)$ are $(E,m_n)$-S}\bigr\},\\
\rR
&=\bigl\{\exists\,E\in I_0:\text{ $\BC_{L_{k+1}}^{(n)}(\Bx)$ and $\BC_{L_{k+1}}^{(n)}(\By)$ are $E$-R}\bigr\},\\
\mathcal{N}_{\Bx}
&=\bigl\{ \BC_{L_{k+1}}^{(n)}(\Bx)\text{ is $m_n$-non-localized}\bigr\},\\
\mathcal{N}_{\By}
&=\bigl\{\BC_{L_{k+1}}^{(n)}(\By)\text{ is $m_n$-non-localized}\bigr\}.
\end{align*}
If $\omega\in\rB_{k+1}\setminus\rR$, then $\forall E\in I_0$, $\BC_{L_{k+1}}^{(n)}(\Bx)$ or $\BC_{L_{k+1}}^{(n)}(\By)$ is $E$-NR. If $\BC_{L_{k+1}}^{(n)}(\By)$ is $E$-NR, then it must be $m_n$-non-localized: otherwise it would have been $(E,m_n)$-NS by Lemma~\ref{lem:localized}. Similarly, if $\BC_{L_{k+1}}^{(n)}(\Bx)$ is $E$-NR, then it must be $m_n$-non-localized. This implies that
\[
\rB_{k+1}\subset\rR\cup\mathcal{N}_{\Bx}\cup\mathcal{N}_{\By}.
\]
Therefore, using Theorem \ref{thm:Wegner} and Lemma \ref{lem:prob.localized}, we have
\begin{align*}
\DP\left\{\rB_{k+1}\right\}&\leq\DP\left\{\rR\right\}+\DP\{\mathcal{N}_{\Bx}\}+\DP\{\mathcal{N}_{\By}\}\\
&\leq L_{k+1}^{-p4^{N}}+\frac{1}{2}L_{k+1}^{-4p\,4^{N-n}}+\frac{1}{2}L_{k+1}^{-4p\,4^{N-n}}.\\
\end{align*}
 Finally 
\begin{equation}\label{eq:bound.PI}
\prob{\rB_{k+1}}\leq L_{k+1}^{-4^N\,p}+L_{k+1}^{-4p4^{N-n}}<L_{k+1}^{-2p4^{N-n}},
\end{equation}
which proves the result.
\end{proof}

For subsequent calculations and proofs we prove the following two Lemmas.

\begin{lemma}\label{lem:MPI}
If $M(\BC^{(n)}_{L_{k+1}}(\Bu),E)\geq \kappa(n)+2$ with $\kappa(n)=n^n$, then $M^{\sep}(\BC^{(n)}_{L_{k+1}}(\Bu),E)\geq 2$.\\
\noindent Similarly, if $M_{\pai}(\BC^{(n)}_{L_{k+1}}(\Bu),E)\geq \kappa(n)+2$  then $M_{\pai}^{\sep}(\BC^{(n)}_{L_{k+1}}(\Bu),E)\geq 2$.
\end{lemma}

\begin{proof}
Assume that $M^{\sep}(\BC^{(n)}_{L_{k+1}}(\Bu),E)<2$, (i.e., there is no pair of separable  cubes of radius $L_k$ in $\BC^{(n)}_{L_{k+1}}(\Bu)$), but $M(\BC^{(n)}_{L_{k+1}}(\Bu),E)\geq \kappa(n)+2$. Then $\BC^{(n)}_{L_{k+1}}(\Bu)$ must contain at least $\kappa(n)+2$ cubes $\BC^{(n)}_{L_k}(\Bv_i)$, $0\leq i\leq \kappa(n)+1$ which are non separable but satisfy $|\Bv_i-\Bv_{i'}|>7NL_k$, for all $i\neq i'$. On the other hand, by Lemma \ref{lem:separable.distant} there are at most $\kappa(n)$ cubes $\BC^{(n)}_{2nL_k}(\By_i)$, such that any cube $\BC^{(n)}_{L_k}(\Bx)$ with $\Bx\notin \bigcup_{j} \BC^{(n)}_{2nL_k}(\By_j)$ is separable from $\BC^{(n)}_{L_k}(\Bv_0)$. Hence $\Bv_i\in \bigcup_{j} \BC^{(n)}_{2nL_k}(\By_j)$ for all $i=1,\ldots,\kappa(n)+1$.  But since for all $i\neq i'$, $|\Bv_i-\Bv_{i'}|>7NL_k$, there must be at most one center $\Bv_i$ per cube $\BC^{(n)}_{2nL_k}(\By_j)$, $1\leq j\leq \kappa(n)$. Hence we come to a contradiction:
\[
\kappa(n)+1\leq \kappa(n).
\]
 The same analysis holds true if we consider only PI cubes.
\end{proof}

\begin{lemma}\label{lem:MND}
With the above notations, assume that $\dsn{k-1,n'}$ holds true for all $1\leq n'<n$ then
\begin{equation}\label{eq:MND}
\DP\left\{M_{\pai}(\BC_{L_{k+1}}^{(n)}(\Bu),I)\geq \kappa(n)+ 2\right\}\leq \frac{3^{2nd}}{2} L_{k+1}^{2nd}\left(L_k^{-4^Np}+L_k^{-4p\,4^{N-n}}\right).
\end{equation}
\end{lemma}

\begin{proof}
Suppose that $M_{\pai}(\BC^{(n)}_{L_{k+1}}(\Bu),I)\geq \kappa(n)+2$, then by Lemma \ref{lem:MPI}, $M_{\pai}^{\sep}(\BC^{(n)}_{L_{k+1}}(\Bu), I)\geq 2$, i.e., there are at least two separable $(E,m_n)$-singular PI cubes $\BC^{(n)}_{L_k}(\Bu^{(j_1)})$, $\BC^{(n)}_{L_k}(\Bu^{(j_2)})$ inside $\BC^{(n)}_{L_{k+1}}(\Bu)$.
The  number of possible pairs of centers $\{\Bu^{(j_1)},\Bu^{(j_2)}\}$ such that
\[
\BC_{L_k}^{(n)}(\Bu^{(j_1)}),\,\BC_{L_k}^{(n)}(\Bu^{(j_2)})\subset\BC_{L_{k+1}}^{(n)}(\Bu)
\]
 is bounded by $\frac{3^{2nd}}{2}L_{k+1}^{2nd}$. Then, setting 
\[
\rB_k=\{\text{$\exists E\in I$, $\BC^{(n)}_{L_k}(\Bu^{(j_1)})$, $\BC^{(n)}_{L_k}(\Bu^{(j_2)})$ are $(E,m_n)$-S}\},
\]
\[
\DP\left\{M_{\pai}^{\sep}(\BC_{L_{k+1}}^{(n)}(\Bu),I)\geq 2\right\}\leq\frac{3^{2nd}}{2}L_{k+1}^{2nd}\times\prob{\rB_k}
\]
with  $\prob{\rB_k}\leq L_k^{-4^Np}+L_k^{-4p\,4^{N-n}}$ by \eqref{eq:bound.PI}.
Here $\rB_k$ is defined as in Theorem \ref{thm:partially.interactive}.

\end{proof}

The two next subsections overlap with our earlier work \cite{E12}.

\subsection{Pairs of fully interactive cubes}\label{ssec:FI.cubes}
Our aim now is to prove $\dskonn$ for a pair of separable fully interactive cubes $\BC_{L_{k+1}}^{(n)}(\Bx)$ and $\BC_{L_{k+1}}^{(n)}(\By)$. We recall a very crucial and hard result obtained in the paper \cite{E12} and which generalized to multi-particle systems some previous work by von Dreifus and Klein \cite{DK89} in the single-particle configuration case.
\begin{lemma}\label{lem:CNR.NS}[\cite{E12} Lemma 8]
Let $J=\kappa(n)+5$ with $\kappa(n)=n^n$ and $E\in\DR$. Suppose that
\begin{enumerate}[\rm(i)]
\item
$\BC_{L_{k+1}}^{(n)}(\Bu)$ is $E$-CNR,
\item
$M(\BC_{L_{k+1}}^{(n)}(\Bu),E)\leq J$.
\end{enumerate}
Then there exists $\tilde{L}_2^*(J,N,d)>0$ such that if $L_0\geq \tilde{L}_{2}^*(J,N,d)$ we have that $\BC^{(n)}_{L_{k+1}}(\Bu)$ is $(E,m_n)$-NS.
\end{lemma}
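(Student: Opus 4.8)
The plan is to establish this deterministic ``single-cube'' estimate by iterating the geometric resolvent identity inside $\BC:=\BC^{(n)}_{L_{k+1}}(\Bu)$, using hypothesis (i) to bound the resolvent on an exceptional region and at the end of the iteration, and hypothesis (ii) to confine that exceptional region and make the iteration contract. Fix $\Bv\in\partial^-\BC$; since $E$-CNR implies $E\notin\sigma(\BH^{(n)}_{\BC,h})$, it suffices by \eqref{eq:singular} to prove $|\BG_{\BC,h}(\Bu,\Bv;E)|\le\ee^{-\gamma(m,L_{k+1},n)L_{k+1}}$. From (i): $\BC$ being $E$-NR yields $\|\BG_{\BC,h}(E)\|\le\ee^{L_{k+1}^{1/2}}$, and $\BC$ being $E$-CNR yields that every $n$-particle sub-box of $\BC$ of side $\ge L_{k+1}^{1/\alpha}$ is $E$-NR, hence also has resolvent of norm $\le\ee^{L_{k+1}^{1/2}}$.

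From (ii): a maximal collection of pairwise $7NL_k$-distant $(E,m)$-singular $L_k$-subcubes of $\BC$ has at most $J$ members, so the union $\mathcal{S}$ of \emph{all} $(E,m)$-singular $L_k$-subcubes of $\BC$, together with its $L_k$-neighbourhood, is covered by at most $J$ boxes $B_1,\dots,B_J\subset\BC$ with side $R_i$ satisfying $L_{k+1}^{1/\alpha}\le R_i\le CL_k<L_{k+1}$ for $L_0$ large (here and below $C=C(N,d)$); by the previous paragraph each $B_i$ is then $E$-NR with $\|\BG_{B_i,h}(E)\|\le\ee^{R_i^{1/2}}$. Now run the radial iteration of the geometric resolvent identity from $\Bu$: at a point $\Bx$ with $\BC^{(n)}_{L_k}(\Bx)\subset\BC$ and $\Bx\notin\mathcal{S}$ the cube $\BC^{(n)}_{L_k}(\Bx)$ is $(E,m)$-NS, so one step contributes a factor $\le CL_k^{nd}\,\ee^{-\gamma(m,L_k,n)L_k}$ and moves to a point $\Bx'$ with $|\Bx'-\Bx|=L_k+1$ (a \emph{good} step); at a point $\Bx$ lying in some $B_i$, the identity across $B_i$ contributes a factor $\le CR_i^{nd}\,\ee^{R_i^{1/2}}$ and carries the point out of $B_i$ (a \emph{bad} step); the iteration terminates once the current point is within $L_k$ of $\partial^+\BC$, at a final cost $\|\BG_{\BC,h}(E)\|\le\ee^{L_{k+1}^{1/2}}$. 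Since the $B_i$ are disjoint there are at most $J$ bad steps, and since every step advances the distance to $\Bu$ by at most $CL_k$ whereas the terminal shell lies at distance $L_{k+1}-CL_k$, the number $s$ of good steps satisfies $s\ge L_{k+1}/L_k-\ord(1)$ with $\ord(1)$ depending only on $N,d$; collecting these estimates,
\[
|\BG_{\BC,h}(\Bu,\Bv;E)|\ \le\ \bigl(CL_k^{nd}\,\ee^{-\gamma(m,L_k,n)L_k}\bigr)^{s}\ \Bigl(\prod_{i=1}^{J}CR_i^{nd}\,\ee^{R_i^{1/2}}\Bigr)\ \ee^{L_{k+1}^{1/2}}.
\]

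It remains to check that the exponent on the right is $\le-\gamma(m,L_{k+1},n)L_{k+1}$, and this is exactly what the scale-dependent mass $\gamma(m,L,n)=m(1+L^{-1/8})^{N-n+1}$, which is \emph{decreasing} in $L$, is designed to deliver: the surplus $\bigl(\gamma(m,L_k,n)-\gamma(m,L_{k+1},n)\bigr)L_{k+1}\gtrsim m\,L_k^{-1/8}L_{k+1}=m\,L_{k+1}^{\,1-1/(8\alpha)}$ produced by iterating at the \emph{smaller} scale $L_k$ must absorb the logarithmic surface factors $\sim(s+J)\,nd\ln L_k$, the at most $J$ non-resonance losses $R_i^{1/2}=\ord(L_k^{1/2})$, and the terminal exponent $L_{k+1}^{1/2}$; this holds once $L_0\ge\tilde{L}_2^*(J,N,d)$, since $m$ is a fixed constant and $L_{k+1}=\lfloor L_k^{\alpha}\rfloor$ with $\alpha$ as in the MSA scheme. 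Taking logarithms then gives $|\BG_{\BC,h}(\Bu,\Bv;E)|\le\ee^{-\gamma(m,L_{k+1},n)L_{k+1}}$ for every $\Bv\in\partial^-\BC$, that is, $\BC^{(n)}_{L_{k+1}}(\Bu)$ is $(E,m)$-NS.

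The step I expect to be the main obstacle is this exponent bookkeeping together with the geometry behind the iteration: one must verify that \emph{every} loss accumulated along the chain --- the polynomial surface factors at each resolvent-identity step, the at most $J$ non-resonance bounds $\ee^{R_i^{1/2}}$ coming from the singular clusters, and the terminal $\ee^{L_{k+1}^{1/2}}$ --- is genuinely subexponential in $L_{k+1}$ and is absorbed by the \emph{single} gain $\approx m\,L_{k+1}^{1-1/(8\alpha)}$, which is what pins down the admissible range of $\alpha$ and the lower bound $\tilde{L}_2^*(J,N,d)$; one must also argue that the iteration can be organized to cross each box $B_i$ at most once while still exiting through the outer shell of $\BC$, so that the number of good steps really is $\approx L_{k+1}/L_k$ and not smaller.
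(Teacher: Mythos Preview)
The paper does not actually prove this lemma; it is simply quoted from \cite{E12} with no argument given. Your scheme --- iterate the geometric resolvent identity, classify each step as good (through an $(E,m)$-NS $L_k$-cube) or bad (across one of $\le J$ singular clusters, which are $E$-NR by the CNR hypothesis), and then balance the exponent --- is precisely the standard deterministic MSA argument used in \cite{E12} and in the Chulaevsky--Suhov papers on which it is modeled. Your exponent bookkeeping is also correct: the surplus $(\gamma(m,L_k,n)-\gamma(m,L_{k+1},n))L_{k+1}\asymp m\,L_{k+1}^{\,1-1/(8\alpha)}$ does dominate the logarithmic boundary factors, the $O(J)$ losses $R_i^{1/2}=O(L_k^{1/2})$, and the terminal $L_{k+1}^{1/2}$, for $L_0$ large.

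The point you flag at the end is, however, a genuine gap as written, and your instinct to single it out is correct. The sentence ``since the $B_i$ are disjoint there are at most $J$ bad steps'' does not follow: disjointness of the $B_i$ only ensures that at each time the current point lies in at most one $B_i$; it does \emph{not} prevent the iteration from re-entering the same $B_i$ after exiting it, since the GRI forces you to move to the boundary point that \emph{maximizes} $|\BG_\BC(\,\cdot\,,\Bv;E)|$, over which you have no control. Likewise, the clause ``every step advances the distance to $\Bu$ by at most $CL_k$'' only yields, via the triangle inequality, a lower bound on the \emph{total} number of steps $s+n_-$, not on $s$ alone; to get $s\ge L_{k+1}/L_k-O(1)$ you really do need $n_-=O(1)$ first. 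In \cite{E12} (following Chulaevsky--Suhov) this is handled by a more careful construction: one enlarges each singular cluster by an $(L_k{+}1)$-collar and merges overlapping enlarged boxes, so that after a bad step the next step is forced to be good and lands at distance $>L_k$ from every cluster, and one then organizes the descent so that each merged cluster is traversed at most once along the path to $\partial^-\BC$. That geometric lemma is the substance of the proof; once it is in place, your display and final estimate go through verbatim.
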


The main result of this subsection is Theorem \ref{thm:fully.interactive}. We will  need the following preliminary results.

\begin{lemma}\label{lem:MD}
Given $k\geq0$, assume that property $\dsknn$ holds true for all pairs of separable FI cubes. Then for any $\ell\geq 1$
\begin{equation}\label{eq:MD}
\DP\left\{M_{\fui}(\BC_{L_{k+1}}^{(n)}(\Bu),I_0)\geq 2\ell\right\}\leq C(n,N,d,\ell)L_k^{2\ell dn\alpha}L_k^{-2\ell p\,4^{N-n}}.
\end{equation}
\end{lemma}

\begin{proof}
Suppose there exist $2\ell$ pairwise separable, fully interactive cubes $\BC_{L_k}^{(n)}(\Bu^{(j)})$ $\subset\BC_{L_{k+1}}^{(n)}(\Bu)$, $1\leq j\leq 2\ell$. Then, by Lemma \ref{lem:disjointness}, for any pair $\BC_{L_k}^{(n)}(\Bu^{(2i-1)})$, $\BC_{L_k}^{(n)}(\Bu^{(2i)})$, the corresponding random Hamiltonians $\BH_{\BC_{L_k}^{(n)}(\Bu^{(2i-1)})}^{(n)}$ and $\BH^{(n)}_{\BC_{L_k}^{(n)}(\Bu^{(2i)})}$ are independent, and so are their spectra and their Green functions. For $i=1,\dots,\ell$ we consider the events:
\[
\rA_i=\left\{\exists\,E\in I_0:\BC_{L_k}^{(n)}(\Bu^{(2i-1)})\text{ and $\BC_{L_k}^{(n)}(\Bu^{(2i)})$ are $(E,m_n)$-S}\right\}.
\]
Then by assumption $\dsknn$, we have, for $i=1,\dots,\ell$,
\begin{equation}
\DP\left\{\rA_i\right\}\leq L_k^{-2p\,4^{N-n}},
\end{equation}
and, by independence of events $\rA_1,\dots,\rA_{\ell}$,
\begin{equation}
\DP\Bigl\{\bigcap_{1\leq i\leq\ell}\rA_i\Bigr\}=\prod_{i=1}^{\ell}\DP(\rA_i)\leq\bigl(L_k^{-2p\,4^{N-n}}\bigr)^{\ell}.
\end{equation}
To complete the proof, note that the total number of different families of $2\ell$ cubes $\BC_{L_k}^{(n)}(\Bu^{(j)})\subset\BC_{L_{k+1}}^{(n)}(\Bu)$, $1\leq j\leq 2\ell$, is bounded by
\[
\frac{1}{(2\ell)!}\left|\BC_{L_{k+1}}^{(n)}(\Bu)\right|^{2\ell}\leq C(n,N,\ell,d)L_{k}^{2\ell dn\alpha}.\qedhere
\]
\end{proof}

\begin{theorem}\label{thm:fully.interactive}
Let $1\leq n\leq N$. There exists $L_2^*=L_2^*(N,d)>0$ such that if $L_0\geq L_2^*$ and if for $k\geq 0$
\begin{enumerate}[\rm(i)]
\item
$\dsn{k-1,n'}$ for all $1\leq n'<n$ holds true,
\item
$\dsn{k,n}$ holds true for all pairs of FI cubes,
\end{enumerate}
then $\dskonn$ holds true
for any pair of separable FI cubes $\BC_{L_{k+1}}^{(n)}(\Bx)$ and $\BC_{L_{k+1}}^{(n)}(\By)$.
\end{theorem}

Above, we used the convention that $\dsn{-1,n}$ means no assumption.

\begin{proof}
Consider a pair of separable FI cubes $\BC_{L_{k+1}}^{(n)}(\Bx)$, $\BC_{L_{k+1}}^{(n)}(\By)$ and set $J=\kappa(n)+5$. Define
\begin{align*}
\rB_{k+1}
&=\left\{\exists\,E\in I_0:\BC_{L_{k+1}}^{(n)}(\Bx)\text{ and $\BC_{L_{k+1}}^{(n)}(\By)$ are $(E,m_n)$-S}\right\},\\
\Sigma
&=\left\{\exists\,E\in I_0:\text{neither $\BC_{L_{k+1}}^{(n)}(\Bx)$ nor $\BC_{L_{k+1}}^{(n)}(\By)$ is $E$-CNR}\right\},\\
\rS_{\Bx}
&=\left\{\exists\,E\in I_0:M(\BC_{L_{k+1}}^{(n)}(\Bx);E)\geq J+1\right\},\\
\rS_{\By}
&=\left\{\exists\,E\in I_0:M(\BC_{L_{k+1}}^{(n)}(\By),E)\geq J+1\right\}.
\end{align*}
Let $\omega\in\rB_{k+1}$. If $\omega\notin\Sigma\cup\rS_{\Bx}$, then $\forall E\in I_0$ either $\BC_{L_{k+1}}^{(n)}(\Bx)$ or $\BC_{L_{k+1}}^{(n)}(\By)$ is $E$-CNR and $M(\BC_{L_{k+1}}^{(n)}(\Bx),E)\leq J$. The cube $\BC_{L_{k+1}}^{(n)}(\Bx)$ cannot be $E$-CNR: indeed, by Lemma \ref{lem:CNR.NS} it would be $(E,m_n)$-NS. So the cube $\BC_{L_{k+1}}^{(n)}(\By)$ is $E$-CNR and $(E,m_n)$-S. This implies again by Lemma \ref{lem:CNR.NS} that
\[
M(\BC_{L_{k+1}}^{(n)}(\By),E)\geq J+1.
\]
Therefore $\omega\in\rS_{\By}$, so that $\rB_{k+1}\subset\Sigma\cup\rS_{\Bx}\cup\rS_{\By}$, hence
\[
\DP\left\{\rB_{k+1}\right\}
\leq\DP\{\Sigma\}+\DP\{\rS_{\Bx}\}+\DP\{\rS_{\By}\}
\]
and $\prob{\Sigma}\leq L_{k+1}^{-4^N\,p}$ by Theorem \ref{thm:Wegner}.
Now let us estimate $\DP\{\rS_{\Bx}\}$ and similarly $\DP\{\rS_{\By}\}$. Since 
\[
M_{\pai}(\BC_{L_{k+1}}^{(n)}(\Bx),E)+M_{\fui}(\BC_{L_{k+1}}^{(n)}(\Bx),E)\geq M(\BC_{L_{k+1}}^{(n)}(\Bx),E),
\] 
the inequality $M(\BC_{L_{k+1}}^{(n)}(\Bx),E)\geq\kappa(n)+ 6$, implies that either $M_{\pai}(\BC_{L_{k+1}}^{(n)}(\Bx),E)\geq\kappa(n)+ 2$, or $M_{\fui}(\BC_{L_{k+1}}^{(n)}(\Bx),E)\geq 4$. Therefore, by Lemma \ref{lem:MND} and Lemma \ref{lem:MD} (with $\ell=2$),
\begin{align*}
\DP\{\rS_{\Bx}\}&\leq\DP\left\{\exists\,E\in I:M_{\pai}(\BC_{L_{k+1}}^{(n)}(\Bx),E)\geq \kappa(n)+2\right\}\\
&\quad+\DP\left\{\exists\,E\in I:M_{\fui}(\BC_{L_{k+1}}^{(n)}(\Bx),E)\geq 4\right\}\\
&\leq\frac{3^{2nd}}{2}L_{k+1}^{2nd}(L_k^{-4^Np}+L_k^{-4p\,4^{N-n}})+C'(n,N,d)L_{k+1}^{4 dn-\frac{4 p}{\alpha}4^{N-n}}\\
&\leq C''(n,N,d)\left(L_{k+1}^{-\frac{4^Np}{\alpha}+2nd}+L_{k+1}^{-\frac{4p}{\alpha}4^{N-n}+2nd}+L_{k+1}^{-\frac{4p}{\alpha}4^{N-n}+4nd}\right)\\
&\leq C'''(n,N,d) L_{k+1}^{-\frac{4p}{\alpha}4^{N-n}+4nd}\tag{$\alpha=3/2$}\\
&\leq\frac{1}{4}L_{k+1}^{-2p\,4^{N-n}},
\end{align*}
where we used $p>4\alpha Nd=6Nd$. Finally 
\[ 
\prob{\rB_{k+1}}\leq L_{k+1}^{-4^Np}+\frac{1}{2}L_{k+1}^{-2p4^{N-n}}<L_{k+1}^{-2p4^{N-n}}.
\]
\end{proof}

\subsection{Mixed pairs of cubes}\label{ssec:mixed.S}
Finally, it remains only to derive $\dskonn$ in case (III), i.e., for pairs of $n$-particle cubes where one is PI while the other is FI.

\begin{theorem}\label{thm:mixed}
Let $1\leq n\leq N$. There exists $L_3^*=L_3^*(N,d)>0$ such that if $L_0\geq L_3^*(N,d)$ and if for $k\geq 0$,
\begin{enumerate}[\rm(i)]
\item
$\dsn{k-1,n'}$ holds true for all $1\leq n'<n$,
\item
$\dsn{k,n'}$ holds true for all $1\leq n'<n$ and
\item
$\dsknn$ holds true for all pairs of FI cubes,
\end{enumerate}
then $\dskonn$ holds true for any pair of separable cubes $\BC_{L_{k+1}}^{(n)}(\Bx)$, $\BC_{L_{k+1}}^{(n)}(\By)$ where one is PI while the other is FI.
\end{theorem}

\begin{proof}
Consider a pair of separable $n$-particle cubes $\BC_{L_{k+1}}^{(n)}(\Bx)$, $\BC_{L_{k+1}}^{(n)}(\By)$ and suppose that $\BC_{L_{k+1}}^{(n)}(\Bx)$ is PI while $\BC_{L_{k+1}}^{(n)}(\By)$ is FI. Set $J=\kappa(n)+5$ and introduce the events
\begin{align*}
\rB_{k+1}
&=\left\{\exists\,E\in I_0:\BC_{L_{k+1}}^{(n)}(\Bx)\text{ and $\BC_{L_{k+1}}^{(n)}(\By)$ are $(E,m)$-S}\right\},\\
\Sigma
&=\left\{\exists\,E\in I_0: \BC_{L_{k+1}}^{(n)}(\Bx)\ \text{is not $E$-CNR and}\ \BC_{L_{k+1}}^{(n)}(\By)\text{ is not $E$-CNR}\right\},\\
\mathcal{N}_{\Bx}
&=\left\{ \BC_{L_{k+1}}^{(n)}(\Bx)\text{ is $m_n$-non-localized}\right\},\\
\rS_{\By}
&=\left\{ \exists\,E\in I_0:M(\BC_{L_{k+1}}^{(n)}(\By),E)\geq J+1\right\}.
\end{align*}
Let $\omega\in \rB_{k+1}\setminus (\Sigma\cup \mathcal{N}_{\Bx})$, then for all $E\in I_0$ either $\BC_{L_{k+1}}^{(n)}(\Bx)$ is $E$-CNR or $\BC_{L_{k+1}}^{(n)}(\By)$ is $E$-CNR and  $\BC_{L_{k+1}}^{(n)}(\Bx)$ is $m_n$-localized. The cube $\BC_{L_{k+1}}^{(n)}(\Bx)$ cannot be $E$-CNR. Indeed, by Lemma \ref{lem:localized} it would have been $(E,m_n)$-NS. Thus the cube $\BC_{L_{k+1}}^{(n)}(\By)$ is $E$-CNR, so by Lemma \ref{lem:CNR.NS}, $M(\BC_{L_{k+1}}^{(n)}(\By);E)\geq J+1$: otherwise $\BC_{L_{k+1}}^{(n)}(\By)$ would be $(E,m_n)$-NS. Therefore $\omega\in \rS_{\By}$. Consequently,
\[
\rB_{k+1}\subset \Sigma \cup \mathcal{N}_{\Bx}\cup\rS_{\By}.
\]
Recall that the probabilities $\DP\{\mathcal{N}_{\Bx}\}$ and $\DP\{\rS_{\By}\}$ have already been estimated in Sections \ref{ssec:PI.cubes} and \ref{ssec:FI.cubes}. We therefore obtain
\begin{align*}
\DP\left\{\rB_{k+1}\right\}&\leq \DP\{\Sigma\}+\DP\{\mathcal{N}_{\Bx}\}+\DP\{\rS_{\By}\}\\
&\leq L_{k+1}^{-4^Np}+ \frac{1}{2}L_{k+1}^{-4p\,4^{N-n}}+ \frac{1}{4}L_{k+1}^{-2p\,4^{N-n}}\leq L_{k+1}^{-2p\,4^{N-n}}.\qedhere
\end{align*}
\end{proof}

\section{Conclusion: the multi-particle multi-scale analysis}

\begin{theorem}\label{thm:DS.k.N}
Let $1\leq n\leq N$ and  $\BH^{(n)}(\omega)=-\BDelta+\sum_{j=1}^n V(x_j,\omega)+\BU$, where $\BU$, $V$ satisfy
 $\condI$ and $\condP$  respectively. There exists $m_n>0$ such that for any $p>6Nd$ property $\dsknn$ holds true for all $k\geq 0$ provided $L_0$ is large enough.
\end{theorem}
\begin{proof}
We prove that for each $n=1,\ldots,N$, property $\dsknn$ is valid. To do so, we use an induction on the number of particles $n'=1,\ldots,n$. For $n=1$  property $\dsn{k,1}$ holds true for all $k\geq 0$ by  the single-particle localization theory (\cites{DK89,K08}). Now suppose that for all $1\leq n'<n$, $\dsn{k,n'}$ holds true for all	 $k\geq 0$, we aim to prove that $\dsknn$ holds true for all $k\geq 0$. For $k=0$, $\dsn{0,n}$ is valid using Theorem \ref{thm:initial.var.energy}. Next, suppose that $\dsn{k',n}$ holds true for all $k'<k$, then by  combining this last assumption with $\dsn{k,n'}$ above, one can conclude that
\begin{enumerate}
\item[\rm(i)] $\dsknn$ holds true for all $k\geq 0$ and for all pairs of PI cubes using Theorem \ref{thm:partially.interactive},
\item[\rm(ii)] $\dsknn$ holds true for all $k\geq 0$ and for all pairs of FI cubes using Theorem \ref{thm:fully.interactive},
\item[\rm(iii)] $\dsknn$ holds true for all $k\geq 0$ and for all pairs of MI cubes using Theorem \ref{thm:mixed}.
\end{enumerate}
 Hence Theorem \ref{thm:DS.k.N} is proven.
\end{proof}

\section{Proofs of the localization results}
For any number $N\geq2$ of particles on the lattice, we use property $\dsn{k,N}$ established  in the above sections to prove our main localization results: Theorems \ref{thm:exp.loc} and \ref{thm:dynamical.loc} (the spectral exponential and the strong Hilbert-Schmidt dynamical localization), in the same way as in our previous paper \cite{E12}, section 5 and in each interval $I_0=[E_0-\delta;E_0+\delta]$ with $E_0\in I$. Finally, since the interval $I$ is compact, we also get localization on $I$.

\begin{bibdiv}

\begin{biblist}

\bib{AW09}{article}{
   author={Aizenman, M.},
   author={Warzel, S.},
   title={Localization bounds for multiparticle systems},
   journal={Commun. Math. Phys.},
   date={2009},
   pages={903--934},
}
\bib{AW10}{article}{
   author={Aizenman, Michael},
   author={Warzel, Simone},
   title={Complete dynamical localization in disordered quantum multi-particle
   systems},
   conference={
      title={XVIth International Congress on Mathematical Physics},
   },
   book={
      publisher={World Sci. Publ., Hackensack, NJ},
   },
   date={2010},
   pages={556--565},
}
\bib{BCSS10}{misc}{
   author={ Boutet de Monvel, A.},
   author={Chulaevsky, V.},
   author={Stollmann, P.},
   author={Suhov, Y.},
   title={Anderson localization for a multi-particle model with an alloy-type external random potential},
   status={arXiv:math-ph/1004.1300v1},
   date={2010},
}
\bib{BCS11}{article}{
   author={ Boutet de Monvel, A.},
   author={Chulaevsky, V.},
   author={Suhov, Y.},
   title={Dynamical localization for multiparticle model with an alloy-type external random potential},
   journal={Nonlinearity},
   volume={24},
   date={2011},
   pages={1451--1472},
}
\bib{CKM87}{article}{
   author={Carmona, R.},
   author={Klein, A.},
   author={Martinelli, F.},
   title={Anderson localization for Bernoulli and other singular potentials},
   journal={Commun. Math. Phys.},
   volume={108},
   date={1987},
   pages={41--66},
}
  
\bib{CL90}{book}{
   author={Carmona, R.},
   author={Lacroix, J.}, 
   title={Spectral Theory of Random Schr\"{o}dinger Operators},
   volume={20},
   publisher={Birkh\"auser Boston Inc.},
   place={Boston, MA},
   date={1990},
}
\bib{C11}{misc}{
   author={Chulaevsky, V.},
   title={Direct scaling analysis of localization in disordered systems II. MUlti-particle lattice systems},
   status={arXiv:math-ph/1106.2234v2},
    date={2011},
}
\bib{C12}{misc}{
   author={Chulaevsky, V.},
   title={Fixed energy multi-particle MSA implies dynamical localization},
   status={arXiv:math-ph/1206.1952},
   date={2012},
}
\bib{CS08}{article}{
   author={ Chulaevsky, V.},
   author={Suhov, Y.},
   title={Wegner bounds for a two particle tight-binding model},
   journal={Commun. Math. Phys.},
   volume={283},
   date={2008},
   pages={479--489},
}
\bib{CS09a}{article}{
   author={Chulaevsky, V.},
   author={Suhov, Y.},
   title={Eigenfunctions in a two-particle Anderson tight binding model},
   journal={Comm. Math. Phys.},
   volume={289},
   date={2009},
   pages={701--723},
}
\bib{CS09b}{article}{
   author={C{h}ulaevsky, V.},
   author={Suhov, Y.},
   title={Multi-particle Anderson Localization: Induction on the number of particles},
   journal={Math. Phys. Anal. Geom.},
   volume={12},
   date={2009},
   pages={117--139},
}
\bib{DS01}{article}{
   author={Damanik, D.},
   author={Stollmann, P.},
   title={Multi-scale analysis implies strong dynamical localization},
   journal={Geom. Funct. Anal.},
   volume={11},
   date={2001},
   number={1},
   pages={11--29},
}
\bib{DSS02}{article}{
   author={Damanik, D.},
   author={SimS, R.},
   author={Stolz, G.},
   title={Localization for one-dimensional, continuum, Bernoulli-Anderson models},
   journal={Duke Math. Journal},
   volume={114},
   date={2002},
   pages={59--100},
}
\bib{DK89}{article}{
   author={von Dreifus, H.},
   author={Klein, A.},
   title={A new proof of localization in the Anderson tight binding model},
   journal={Commun. Math. Phys.},
   volume={124},
   date={1989},
   pages={285--299},
}
\bib{E11}{article}{
   author={Ekanga, T.},
   title={On two-particle Anderson localization at low energies},
   journal={C. R. Acad. Sci. Paris, Ser. I},
   volume={349},
   date={2011},
   pages={167--170},
}
\bib{E12}{misc}{
   author={Ekanga, T.},
   title={Anderson localization at low energies in the multi-particle tight binding model},
   status={arXiv:math-ph/1201.2339v2},
   date={2012},
}

\bib{FMSS85}{article}{
   author={Fr\"{o}hlich, J.},
   author={Martinelli, F.},
   author={Scoppola, E.},
   author={Spencer, T.},
   title={Constructive proof of localization in the Anderson tight binding
   model},
   journal={Commun. Math. Phys.},
   volume={101},
   date={1985},
   pages={21--46},
}
\bib{FW15}{article}{
   author={Fauser, M.},
   author={Warzel, S.},
   title={Multi-particle localization for disordered systems on continuous space via the fractional moment method},
   journal={Rev. Math. Phys.},
   volume={27},
   number={4},
   date={2015},
   }
\bib{GB98}{article}{
   author={Germinet, F.},
   author={De Bi\`{e}vre, S.},
   title={Dynamical localization for discrete and continuous random
   Schr\"odinger operators},
   journal={Comm. Math. Phys.},
   volume={194},
   date={1998},
   number={2},
   pages={323--341},
}

\bib{GK01}{article}{
   author={Germinet, F.},
   author={Klein, A.},
   title={Bootstrap Multi-Scale Analysis and localization in random media},
   journal={Commun. Math. Phys.},
   volume={222},
   date={2001},
   pages={415--448},
}

\bib{K08}{misc}{
   author={Kirsch, W.},
   title={An Invitation to Random Schr\"{o}dinger Operators},
   date={2008},
   status={Panorama et Synth\`eses, 25, Soc. Math. France, Paris},
}
\bib{KN13}{article}{
   author={Klein, A.},
   author={T. Nguyen},
   title={The boostrap multiscale analysis for the multiparticle Anderson model},
   journal={J. Stat. Phys.},
   volume={151},
   date={2013},
   pages={938--973},
}
\bib{KS80}{article}{
   author={Kunz, H.},
   author={Souillard, B.},
   title={Sur le spectre des op\'{e}rateurs aux diff\'{e}rences finies al\'{e}atoires},
   journal={Commun. Math. Phys.},
   volume={78},
   date={1980},
   pages={201--246},
}
\bib{St01}{book}{
   author={Stollmann, P.},
   title={Caught by disorder},
   series={Progress in Mathematical Physics},
   volume={20},
   note={Bound states in random media},
   publisher={Birkh\"auser Boston Inc.},
   place={Boston, MA},
   date={2001},
}
\end{biblist}
\end{bibdiv}

\end{document}